\theoremstyle{plain}
\newcommand{\defparproblem}[4]{
  \vspace{3mm}
\noindent\fbox{
  \begin{minipage}{.95\textwidth}
  \begin{tabular*}{\textwidth}{@{\extracolsep{\fill}}lr} \textsc{#1}\\ \end{tabular*}
  {\bf{Input:}} #2  \\
  {\bf{Parameter:}} #3 \\
  {\bf{Question:}} #4
  \end{minipage}
  }
  \vspace{2mm}
}
\newcommand{\OO}{{\mathcal O}}
\newcommand{\HamDFVS}{{\sc Hamiltonian Cycle By DFVS}}
\newcommand{\HamPathDFVS}{{\sc Hamiltonian Path By DFVS}}
\newcommand{\LongPathG}{{\sc Longest Path Above Girth}}
\newcommand{\LongCycleG}{{\sc Longest Cycle Above Girth}}
\newcommand{\lef}{{\mathsf left}}
\newcommand{\ri}{{\mathsf right}}
\newcommand{\out}{{\mathsf out}}
\newcommand{\inn}{{\mathsf in}}
\newcommand{\sv}[1]{}
\newtheorem{reduction rule}{Reduction Rule}%[subsection]
\newtheorem{branching rule}{Branching Rule}
\newcommand{\todom}[1]{\todo[linecolor=blue,backgroundcolor=blue!25,bordercolor=blue]{#1}}
\newcommand{\mic}[1]{\textcolor{black}{#1}}
\title{Finding Long Directed Cycles Is Hard \\ Even When DFVS Is Small Or Girth Is Large}
\author{Ashwin Jacob}{Ben Gurion University of the Negev, Beersheba, Israel}{ashwinj@bgu.ac.il}{https://orcid.org/0000-0003-4864-043X}{}%{(Optional) author-specific funding acknowledgements}%TODO mandatory, please use full name; only 1 author per \author macro; first two parameters are mandatory, other parameters can be empty. Please provide at least the name of the affiliation and the country. The full address is optional
\author{Micha{\l} W{\l}odarczyk}{Ben Gurion University of the Negev, Beersheba, Israel}{michal.wloda@gmail.com}{https://orcid.org/0000-0003-0968-8414}{}
\author{Meirav Zehavi}{Ben Gurion University of the Negev, Beersheba, Israel}{meiravze@bgu.ac.il}{https://orcid.org/0000-0002-3636-5322}{}
\date{}
\authorrunning{A. Jacob, M. W{\l}odarczyk, and M. Zehavi} %TODO mandatory. First: Use abbreviated first/middle names. Second (only in severe cases): Use first author plus 'et al.'
\keywords{Hamiltonian cycle, longest path, directed feedback vertex set, directed graphs, parameterized
complexity} %TODO mandatory; please add comma-separated list of keywords
\begin{document}
\maketitle

\begin{abstract}
We study the parameterized complexity of two classic problems on directed graphs: {\sc Hamiltonian Cycle} and its generalization {\sc Longest Cycle}. Since 2008, it is known that {\sc Hamiltonian Cycle} is W[1]-hard when parameterized by {\em directed treewidth} [Lampis et al., ISSAC'08]. By now, the question of whether it is FPT parameterized by the {\em directed feedback vertex set} (DFVS) number has become a longstanding open problem. In particular, the DFVS number is the largest natural \mic{directed} width measure studied in the literature.
In this paper, we provide a negative answer to the question, showing that even for the DFVS number, the problem \mic{remains} W[1]-hard.
As a consequence, we also obtain that {\sc Longest Cycle} is W[1]-hard on directed graphs when parameterized multiplicatively above girth, in contrast to the undirected case. This resolves an open question posed by Fomin et al.~[ACM ToCT'21] and Gutin and Mnich [arXiv:2207.12278].
\mic{Our hardness results apply to the path versions of the problems as well.}
On the positive side, we show that \mic{{\sc Longest Path} parameterized multiplicatively above girth} belongs to the class XP.
%{\sc Hamiltonian Cycle}
%This resolve a question by [Fomin, Golovach, Lokshtanov, Panolan, Saurabh, Zehavi, ITCS'20]
\end{abstract}
\clearpage

\section{Introduction}

\subparagraph*{Hamiltonian Cycle (Path) Parameterized by DFVS.}
%\todom{Can we mention that our considerations apply to both cycle and path, and avoid repeating ``(path)''?}\todoz{I don't have a preference about this, so if you prefer to say this and avoid (path), please change}
In {\sc Hamiltonian Cycle (Path)}, we are given a (directed or undirected) graph~$G=(V,E)$, and the objective is to determine whether $G$ contains a simple cycle (path) of length~$n=|V(G)|$, called a {\em Hamiltonian cycle (path)}. {\sc Hamiltonian Cycle}\footnote{Throughout the following paragraphs, we refer only to the cycle variant of the problem, but the same statements also hold for the path variant.} has been widely studied, from various algorithmic~(see, e.g., \cite{DBLP:journals/talg/FominGLSZ19,bergougnoux2020optimal,cygan2018fast,cygan2022solving,DBLP:conf/stacs/Bjorklund21,DBLP:conf/icalp/Bjorklund019,DBLP:conf/icalp/BjorklundKK17,DBLP:journals/siamcomp/Bjorklund14,DBLP:conf/isaac/Bjorklund18,DBLP:conf/focs/BjorklundH13}) and structural~(see, e.g., the survey \cite{gould2003advances}) points of view. This problem is among the first problems known to be NP-complete~\cite{DBLP:conf/coco/Karp72}, and it remains NP-complete on various restricted graph classes (see, e.g.,~\cite{garey1974some,akiyama1980np,buro2001simple}).
Nevertheless, \mic{a longest path can be found easily}
%it is easily solvable 
in polynomial-time on the large class of directed acyclic graphs (DAGs) using dynamic programming.\footnote{For example, see the lecture notes at people.csail.mit.edu/virgi/6.s078/lecture17.pdf.} Thus, it is natural to ask---can we 
%extend the tractability of {\sc Hamiltonian Cycle (Path)} on DAGs to wider graph classes?
\mic{solve {\sc Hamiltonian Cycle} efficiently on wider classes of directed graphs that resemble DAGs  to some extent?}

\mic{In the undirected realm, the class of acyclic graphs (i.e., forests) can be generalized to graphs of bounded treewidth~\cite{DBLP:books/sp/CyganFKLMPPS15}.
The celebrated Courcelle's theorem~\cite{CourcelleE12} states that a wide range of problems, including {\sc Hamiltonian Cycle}, are {\em fixed-parameter tractable} (FPT)\footnote{A problem is FPT (resp. XP) if it is solvable in time $f(k)\cdot n^{O(1)}$ (resp. $n^{f(k)}$) for some computable function $f$ of the parameter~$k$, where $n$ is the size of the input.} 
when parameterized by treewidth.}
Historically, the notion of {\em directed treewidth} (Definition \ref{def:directedTw}) was introduced by Johnson et al.~\cite{DBLP:journals/jct/JohnsonRST01} as a generalization of the undirected notion contrived to solving linkage problems such as {\sc Hamiltonian Cycle}; see also~\cite{DBLP:journals/endm/Reed99}. Since then, directed treewidth has been intensively studied~(see, e.g., \cite{kawarabayashi2015directed,adler2007directed,DBLP:journals/corr/abs-1910-01826,DBLP:conf/soda/GiannopoulouKKK22,DBLP:conf/soda/GiannopoulouKKK20,DBLP:conf/soda/HatzelKK19}). Over the years, various other width measures for directed graphs have been proposed and studied as well, where the most prominent ones are the DFVS number (which is the minimum number of vertices to remove to make the graph acyclic), {\em directed pathwidth} (defined similarly to directed treewidth, where we replace trees by paths), {\em DAG-width}~\cite{DBLP:conf/soda/Obdrzalek06} and {\em Kelly-width}~\cite{DBLP:journals/tcs/HunterK08}. Notably, DFVS is the \mic{largest} among them (see Fig.~\ref{fig:ecology}). Johnson et al.~\cite{DBLP:journals/jct/JohnsonRST01} and Hunter et al.~\cite{DBLP:journals/tcs/HunterK08} proved that {\sc Hamiltonian Cycle} (as well as {\sc Longest Cycle}, defined later) is {\em slice-wise polynomial} (XP) parameterized by directed treewidth. Later, Lampis et al.~\cite{DBLP:journals/disopt/LampisKM11} (originally in 2008~\cite{DBLP:conf/isaac/LampisKM08}) showed that, here, this is the ``right'' form of the time complexity: they proved that {\sc Hamiltonian Cycle} parameterized by directed pathwidth is W[1]-hard (in fact, even W[2]-hard), implying that the problem is unlikely to be FPT. %{\em fixed-parameter tractable (FPT)}, that is, solvable in time $f(k)\cdot n^{O(1)}$ for any computable function $f$ of $k$~\cite{DBLP:books/sp/CyganFKLMPPS15}.
%\todom{Compare to the undirected realm?}\todoz{Not sure, will it be useful to a reader or for us to sell the paper? Perhaps you can add some sentence that the objective was to replicate the success of tw to digraphs if you find that useful, but anyway that is implies by the first sentence of the quote below.}

Thus, since 2008, it has remained open whether {\sc Hamiltonian Cycle} parameterized by DFVS is FPT or W[1]-hard (see Fig.~\ref{fig:ecology}). As a follow-up to their result, Kaouri et al.~\cite{FPTnews} wrote: {\em
%``Treewidth occupies a sweet spot ... 
``Treewidth occupies a sweet spot in the map
of width parameters: restrictive enough to be efficient
and general enough to be useful. What is the right analogue which occupies a similar sweet spot (if it exists) for digraphs? One direction is to keep searching for the right width that generalizes DAGs, that is searching the area around (and probably above) DFVS. ... The
other possibility is that widths which generalize DAGs,
such as DFVS and all the currently known widths, may
not necessarily be the right path to follow. ... The
search~is~on!''}

\begin{figure}[t]
    \centering
    \includegraphics[scale=0.43]{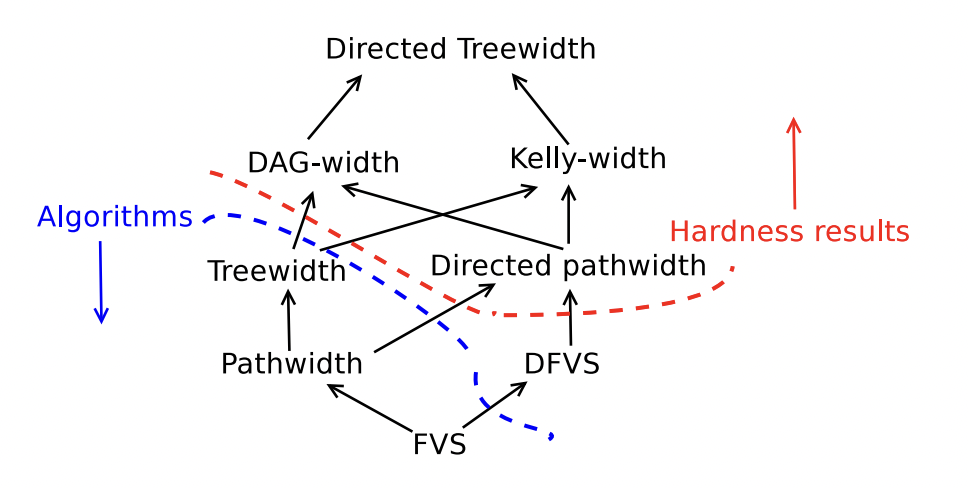}
    \caption{Figure 3 in~\cite{FPTnews}. Caption (verbatim): {\em ``The ecology of digraph widths. Undirected measures refer to the underlying undirected graph. Arrows denote generalizations (for example small treewidth
implies small kelly-width). The dashed lines indicate
rough borders of known tractability and intractability for
most studied problems (including {\sc Hamiltonian Cycle}).''}}
    \label{fig:ecology}
\end{figure}

Our main contribution is, essentially, the finish line for this search for {\sc Hamiltonian Cycle}. We prove that, unfortunately, already for DFVS \mic{number} itself, the problem is W[1]-hard. Formally, {\sc Hamiltonian Cycle (Path) By DFVS} is defined~as~follows: Given a directed graph $G = (V, E)$ and a subset $X \subseteq V(G)$ such that $G - X$ is acyclic, determine whether $G$ contains a Hamiltonian cycle (path). Here, the parameter is $|X|$.
%
%\defparproblem{{\sc Hamiltonian Cycle (Path) By DFVS}}{A directed graph $G = (V, E)$ and a subset~$X \subseteq V(G)$ such that~$G - X$ is acyclic.}{$|X|$}{Does~$G$ contain a Hamiltonian cycle (path)?}
%
Note that $X$ need not be given as input---the computation of a minimum-size vertex set whose removal makes a given directed graph acyclic is in FPT~\cite{DBLP:journals/jacm/ChenLLOR08}, so, given a directed graph $G$, we can simply run the corresponding algorithm to attain $X$ whose size is the DFVS number of $G$. %Now, the statement of our theorem is as follows.
%\todo{Show hardness for the path version: replace some vertex $v$ by two vertices, a source having all outneighbors of $v$ and a sink having all inneighbors of $v$.}

 \begin{restatable}{theorem}{thmHamCycleDFVS}
 \label{thm:HamCycleDFVS}
{\sc Hamiltonian Cycle By DFVS} is {\em W[1]}-hard.
 \end{restatable}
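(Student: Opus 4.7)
\medskip

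\noindent\textbf{Proof proposal.}
I would prove this by a parameterized reduction from \textsc{Multicolored Clique}, a standard W[1]-hard problem: given a graph $H$ with vertex set partitioned into color classes $V_1,\dots,V_k$, decide whether $H$ contains a clique using exactly one vertex from each class. The goal is to construct, in polynomial time, a directed graph $G$ together with a subset $X\subseteq V(G)$ satisfying $|X|=f(k)$ for some computable $f$, such that $G-X$ is acyclic and $G$ admits a Hamiltonian cycle if and only if $H$ admits a multicolored $k$-clique.

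The overall design exploits the following structural observation: since $G-X$ is a DAG, every cycle of $G$ uses at least one vertex of $X$, and a Hamiltonian cycle in $G$ visits the vertices of $X$ in some cyclic order $\pi$, decomposing $V(G)\setminus X$ into $|X|$ vertex-disjoint directed paths, each going from one $X$-vertex to its $\pi$-successor through the DAG. My plan is to introduce $O(k^2)$ ``junction'' vertices in $X$---an $(\mathsf{in},\mathsf{out})$-pair for every unordered pair of colors $\{i,j\}$---laid out along a rigid backbone that forces the Hamiltonian cycle to visit them in a canonical cyclic order. The segment between $\mathsf{in}_{ij}$ and $\mathsf{out}_{ij}$ then runs through a ``pair gadget'' $G_{ij}$ built as a DAG whose internal Hamiltonian paths are in bijection with the edges of $H$ between $V_i$ and $V_j$; traversing $G_{ij}$ therefore amounts to choosing such an edge.

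The heart of the construction, and its main obstacle, is enforcing \emph{consistency}: the edges selected in the gadgets $G_{ij}$ and $G_{ij'}$ must use the same vertex of $V_i$. Natural consistency mechanisms (shared vertices, back-edges between gadgets, equality-checking subgraphs) tend to create cycles inside $G-X$, blowing up the DFVS number. I plan to resolve this by committing to a global topological layering of $G-X$ in which every ``consistency wire'' points strictly forward: the vertex of $V_i$ chosen inside $G_{ij}$ is encoded by which path of a narrow selection chain is traversed, and the chains for color $i$ across all gadgets $G_{ij}$ are arranged so that any inconsistency blocks a later path from being completed (leaving some vertex unvisited). Where feedback from a later gadget to an earlier one is unavoidable, I would route it through a small auxiliary set of $O(k)$ ``commit'' vertices added to $X$, keeping $|X|=O(k^2)$ overall. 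A convenient technical tool is to build each gadget as a layered DAG with a fixed number of layers and uniform ``filler'' paths, so that a Hamiltonian $(\mathsf{in}_{ij},\mathsf{out}_{ij})$-path is forced to hit exactly one ``live'' row per layer, and the live rows across layers jointly encode an edge of $H$.

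With the gadgets in place, I would conclude by: (i) verifying $|X|=O(k^2)$; (ii) checking that $G-X$ is acyclic by exhibiting an explicit topological order that respects the forward-only wiring; (iii) proving the forward direction---given a $k$-clique, describing the Hamiltonian cycle gadget-by-gadget; and (iv) proving the backward direction---showing that any Hamiltonian cycle must follow the canonical order on $X$, must choose a single edge in each $G_{ij}$, and that the forward-only consistency wires force the chosen edges to form a clique. The same construction handles the path version by cutting the backbone at a designated pair of $X$-vertices designated as path endpoints. I expect the bulk of the technical difficulty to lie in step (ii) combined with making step (iv) go through: ensuring that every cycle we \emph{need} (as part of the Hamiltonian cycle) passes through $X$ while every consistency constraint we \emph{need} is still enforced.
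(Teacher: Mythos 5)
Your plan shares the paper's starting point---a parameterized reduction from \textsc{Multicolored Clique} targeting a DFVS of size $O(k^2)$---and you correctly identify the central obstacle: enforcing consistency of vertex choices without introducing many short cycles. However, your architecture is genuinely different from the paper's and, more importantly, the consistency mechanism you describe is left at the level of intent rather than construction, which is exactly where the reduction lives or dies.

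Concretely, you organize the reduction around $\binom{k}{2}$ \emph{pair gadgets} $G_{ij}$, each a DAG whose internal Hamiltonian $(\mathsf{in}_{ij},\mathsf{out}_{ij})$-paths are in bijection with edges of $H$ between $V_i$ and $V_j$; this is an \emph{edge-selection} scheme, and you then must force the $k-1$ gadgets touching color $i$ to agree on a single vertex of $V_i$. You propose ``forward-only consistency wires'' plus $O(k)$ ``commit'' vertices in $X$ to absorb unavoidable feedback. This is where the gap is: you never exhibit a gadget in which a wire pointing strictly forward in the topological order actually \emph{blocks} a Hamiltonian cycle when two edge choices disagree on a $V_i$-endpoint, and it is far from clear that $O(k)$ extra DFVS vertices suffice to realize all cross-gadget constraints. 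Note that color~$i$ occurs in gadgets that are not consecutive under any natural linear layout of the $\binom{k}{2}$ pairs, so the ``narrow selection chain'' for color~$i$ must thread through gadgets for other colors without creating cycles outside $X$ or exhausting your budget of commit vertices; none of this is argued.

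The paper avoids this difficulty by switching to a \emph{vertex-selection} scheme with only $k$ choice gadgets (one per color), not $\binom{k}{2}$. The $i$-th gadget is a long directed path $P^i$ partitioned into blocks $P_v$, one per $v\in V^i$; a single universal vertex $u^i$ (placed in $X$) allows the Hamiltonian cycle to bypass exactly one block, and the identity of the skipped block \emph{is} the choice of $v_i$. Since the choice for color $i$ is made once, there is nothing to keep consistent. Adjacency verification is then delegated to $k-1$ auxiliary cycles $C^{i\to j}$ attached to $P^i$, each contributing only one vertex to $X$, and to cross edges from $P^i$ to $P^j$ that encode the adjacency matrix; Lemmas~\ref{lemma-H-cycle-edges}--\ref{lemma-Pij} then pin down that each $C^{i\to j}$ is entered exactly once, from the single skipped block, forcing a unique adjacency edge $(v_i^{j,\out},v_j^{i,\inn})$ to be used. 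In short, your proposal has the right target and the right awareness of the obstacle, but it takes the harder ``edge-selection + consistency'' route and leaves the hard part unresolved, whereas the paper's ``vertex-selection + verification'' route dissolves the consistency problem by design; you would need to supply the concrete gadget that makes your forward-only wiring actually enforce agreement, or pivot to something closer to the paper's architecture.
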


% \todo{Use restatable. MW: ... later we have separate theorems of HamCycle and HamPath. MZ: ok, maybe replace here by 2 theorems?}

\mic{We also obtain the same hardness result for {\sc Hamiltonian Path By DFVS}. }
Notice that {\sc Hamiltonian Cycle (Path) By DFVS} is in XP, since, as mentioned earlier, it is already known to be in XP even when parameterized by the smaller parameter directed treewidth~\cite{DBLP:journals/jct/JohnsonRST01}. So, the classification of this problem is resolved.

Lastly, we remark that the choice of the larger directed feedback arc set (DFAS) number is also futile, as it also yields W[1]-hardness. Indeed, there is a simple reduction that shows this: Replace each vertex $v$ in $S$ by two vertices, $v_{\mathsf{in}}$ and $v_{\mathsf{out}}$, so that all vertices that were in-neighbors (out-neighbors) of $v$ become in-neighbors (out-neighbors) of $v_{\mathsf{in}}$ ($v_{\mathsf{out}}$), and add the arc $(v_{\mathsf{in}},v_{\mathsf{out}})$. It is immediate that the original instance is a Yes-instance of {\sc Hamiltonian Cycle} if and only if the new instance is, and that the DFVS number of the original graph equals the DFAS number of the new graph.

%\smallskip
\subparagraph*{Longest Cycle (Path) Above Girth.} In the {\sc Longest Cycle (Path)} problem, also known as {\sc $k$-Long Cycle} ({\sc $k$-Path}), we are given a (directed or undirected) graph~$G$ and a non-negative integer~$k$, and the objective is to determine whether~$G$ contains a simple cycle (path) of length at least~$k$. Clearly, {\sc Longest Cycle} is a generalization of {\sc Hamiltonian Cycle} as the latter is the special case of the former when $k=n$. Over the past four decades, {\sc Longest Cycle} and {\sc Longest Path} have been intensively studied from the viewpoint of parameterized complexity. There has been a long race to achieve the best known running time on general (directed and undirected) graphs for the parameter is $k$~\cite{DBLP:journals/jcss/BjorklundHKK17,DBLP:journals/ipl/Williams09,DBLP:conf/icalp/Koutis08,DBLP:conf/esa/Zehavi15,DBLP:journals/jcss/ShachnaiZ16,DBLP:journals/ipl/FominLPSZ18,DBLP:journals/ipl/Zehavi16,DBLP:journals/tcs/Tsur19b,DBLP:journals/jacm/AlonYZ95,DBLP:conf/icalp/BrandP21,monien1985find,chen2009randomized,DBLP:journals/algorithmica/HuffnerWZ08,bodlaender1993linear,gabow2008finding}, where the current winners for directed graphs have time complexity $4^k\cdot n^{O(1)}$~\cite{DBLP:journals/ipl/Zehavi16} for {\sc Longest Cycle} and $2^k\cdot n^{O(1)}$~\cite{DBLP:journals/ipl/Williams09} for {\sc Longest Path}.  Moreover, the parameterized complexity of the problems was analyzed with respect to structural parameterizations (see, e.g.,~\cite{cygan2018fast,cygan2022solving,ganian2015improving,DBLP:journals/talg/FominGLSZ19,bergougnoux2020optimal,DBLP:conf/latin/Golovach0S0Z20,doucha2012cluster}), on special graph classes (see, e.g.,~\cite{dorn2012catalan,lokshtanov2011planar,DBLP:journals/jocg/ZehaviFLP021,DBLP:journals/dcg/FominLPSZ19,DBLP:conf/icalp/FominLP0Z19,DBLP:conf/stoc/Nederlof20a}), and when counting replaces decision~\cite{DBLP:journals/talg/LokshtanovBSZ21,DBLP:journals/talg/AlonG10,DBLP:conf/iwpec/AlonG09,arvind2002approximation,flum2004parameterized,bjorklund2009counting,vassilevska2009finding,
curticapean2017homomorphisms,curticapean2014complexity,
brand2018extensor}. In fact, {\sc Longest Path} is widely considered to be one of the most well studied in the field,  being, perhaps, second only to {\sc Vertex Cover}~\cite{DBLP:books/sp/CyganFKLMPPS15}.

We  consider the ``multiplicative above guarantee'' parameterization for {\sc Longest Path} where the guarantee is girth\footnote{The girth of a graph is the length of its shortest cycle, and it is easily computable in polynomial time.}, called {\sc Longest Cycle (Path) Above Girth}, which is defined as follows (for directed graphs): Given a directed graph~$G = (V, E)$ with girth~$g$ and a positive integer~$k$, determine whether $G$ contains a cycle (path) of length at least~$g \cdot k$. Here, the parameter is $k$.
%
%\defparproblem{{\sc Longest Cycle (Path) Above Girth}}{A directed graph~$G = (V, E)$ with girth~$g$ and a positive integer~$k$.}{$k$}{Does~$G$ contain a cycle (path) of length at least~$g \cdot k$?}
%
This parameterization of {\sc Longest Cycle (Path)} was considered by Fomin et al.~\cite{DBLP:journals/toct/FominGLPSZ21}, who proved that on undirected graphs the problem is in FPT. They posed the following open question: {\em ``For problems on directed graphs, parameterization multiplicatively above girth (which is now the length of a shortest directed cycle) may also make sense. For example, what is the parameterized complexity of {\sc Directed Long Cycle} under this parameterization?''} This question was also posed by Gutin et al.~\cite{gutin2022survey} as Open Question 9. As our second contribution, we resolve this question---in sharp contrast to the undirected case, the answer is negative:
%\todo{Hardness for the cycle version: e.g., for every pair of vertices $u,v$, create one instance by adding a long path of from $u$ to $v$ and possibly increasing $k$.}

\begin{theorem}\label{thm:LongCycleGirth}
{\sc Longest Cycle Above Girth} is {\em W[1]}-hard.
\end{theorem}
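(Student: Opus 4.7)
The plan is to give a parameterized reduction from \HamDFVS\ (Theorem~\ref{thm:HamCycleDFVS}). Given an input $(G, X)$ with $n = |V(G)|$ and $|X| = k$, I will construct an instance $(G', k')$ of \LongCycleG\ with $k' := k$. The crux is to transform $G$ into a graph $G'$ in which every cycle has length exactly $j \cdot M$, where $j$ is the number of vertices of $X$ visited and $M$ is a uniform ``segment length'' determined by the construction. Under such a property the girth of $G'$ equals $M$ (realised by cycles visiting only one $X$-vertex), any cycle of length at least $k \cdot M$ must visit all of $X$, and -- by tuning $M$ so that $k M$ equals the total number of vertices of $G'$ -- such a maximal cycle is necessarily Hamiltonian in $G'$, which corresponds bijectively to a Hamiltonian cycle of $G$.

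To achieve the uniform-segment-length property, I would first split each $x \in X$ into two vertices $x_\inn$ and $x_\out$, inheriting respectively the in-arcs and out-arcs of $x$; the resulting auxiliary graph $G^\ast$ is then a DAG. Fix a linear extension of $G^\ast$ and assign a depth $d(v) \in \mathbb{N}$ to each vertex. Subdivide every arc $(u,v)$ of $G^\ast$ by $d(v)-d(u)-1$ fresh internal vertices, turning it into a directed path of length $d(v)-d(u)$. By telescoping, every $u$-to-$v$ walk in the subdivided graph then has length exactly $d(v)-d(u)$. Finally, reconnect each $x \in X$ by a directed path from $x_\out$ to $x_\inn$ of length $M - (d(x_\inn) - d(x_\out))$, with $M$ chosen large enough that all these lengths are positive. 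A short calculation shows that any cycle visiting $j$ of the $X$-vertices in $G'$ has total length exactly $j \cdot M$, as desired.

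For the correspondence between Hamiltonian cycles of $G$ and maximal cycles of $G'$, I would observe that each subdivided edge and each added $x_\out$-to-$x_\inn$ path must be traversed in full whenever any of its internal vertices is used, so Hamiltonian cycles of $G'$ pull back canonically to Hamiltonian cycles of $G$. Choosing the depth function $d$ and the integer $M$ so that $k M$ equals $|V(G')|$ upgrades ``cycle of length $\geq k M$ in $G'$'' into ``Hamiltonian cycle of $G'$'', and hence into ``Hamiltonian cycle of $G$''.

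The main obstacle will be the careful choice of depths to guarantee the identity $k M = |V(G')|$ while keeping all added path lengths positive. This amounts to a single linear condition relating $M$ with $\sum_e (d(v)-d(u))$ and $\sum_x (d(x_\inn) - d(x_\out))$, which I expect to meet by inserting appropriate padding -- for instance, prepending a long directed path to one designated arc of $G^\ast$ to balance the count. The remaining arguments -- that any cycle of $G'$ of length $\geq k M$ must visit every vertex of $X$ and is therefore Hamiltonian -- then follow from the uniform-segment-length property. Since $k' = k$, W[1]-hardness of \HamDFVS\ transfers to \LongCycleG.
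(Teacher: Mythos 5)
Your high-level plan is sound and matches the paper's: reduce from \HamDFVS{} by splitting each $X$-vertex into an in-copy and an out-copy and stretching the graph with long paths so that girth becomes comparable to $|V(G')|$. However, you pursue a considerably more ambitious invariant than the paper does, and two points need attention.

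First, a small but important typo: the reconnecting path must go from $x_{\inn}$ to $x_{\out}$, not from $x_{\out}$ to $x_{\inn}$. Since $x_{\inn}$ inherits the in-arcs and $x_{\out}$ the out-arcs, a cycle must enter at $x_{\inn}$ and leave from $x_{\out}$; with the direction you wrote, $x_{\inn}$ is a sink and $x_{\out}$ a source, so no cycle could pass through either. With the path oriented $x_{\inn}\to x_{\out}$ of length $M-(d(x_{\inn})-d(x_{\out}))$, your telescoping calculation does yield length exactly $jM$ for every cycle visiting $j$ vertices of $X$.

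Second, and this is the substantive gap, you assert that the girth equals $M$ ``realised by cycles visiting only one $X$-vertex,'' but such a cycle need not exist: there may be no directed cycle in $G$ that meets $X$ in exactly one vertex (equivalently, $X$ may not be a minimal DFVS). In that case the girth of $G'$ would be $2M$ or more, and then $g\cdot k' > |V(G')|$, so the \LongCycleG{} instance would be trivially No even when $G$ is Hamiltonian. You need an extra device to force a cycle of length exactly $M$. The paper achieves this by adding a backward edge $(v_{\out},v_{\inn})$ for each $v\in X$, which closes a cycle of length $n$ inside the same component while demonstrably not interfering with Hamiltonicity; an alternative fix compatible with your construction would be to first replace $X$ by a minimal DFVS (polynomial time, parameter only decreases), which guarantees a cycle through exactly one $X$-vertex, or to append a vertex-disjoint directed cycle of length $M$ and verify the equivalence survives. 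Finally, your linear condition $kM=|V(G')|$ is essential for the backward direction (visiting all of $X$ does not by itself yield Hamiltonicity), and tuning it by padding is delicate because subdividing arcs or shifting a depth perturbs both sides of the equation simultaneously. The paper sidesteps all this bookkeeping by simply taking every added $v_{\inn}$--$v_{\out}$ path to have the \emph{same} length $n-1$ (with one path of length $n+k-1$ to round the vertex count to $n(k+1)$); it does not need nor use the stronger ``all cycle lengths are multiples of $M$'' invariant you aim for, and it then sets $k'=k+1$. Your route can be made to work, but the uniform-segment-length property buys you nothing the paper's simpler long-cycle argument does not already provide, at the cost of two nontrivial details you have not yet closed.
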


\mic{Again, we can transfer the hardness also to the longest path setting.}
On the positive side, we 
\mic{give a deterministic XP algorithm for the path variant of the problem.}
%prove the following theorem, where our algorithm is deterministic:
%\todo{Write the proof so that it encompasses both the path and cycle versions. MW: we gave up}

 \begin{restatable}{theorem}{thmLongPathXP}
 \label{thm:LongPathXP}
{\sc Longest Path Above Girth} is in {\em XP}.
\end{restatable}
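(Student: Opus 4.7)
The plan is to prove XP-tractability of {\sc Longest Path Above Girth} by combining a Win-Win dichotomy---applied within each strongly connected component (SCC)---between cycle packing and a small directed feedback vertex set (DFVS), together with the XP algorithm for longest-path problems parameterized by DFVS (inherited from the XP algorithm for directed treewidth~\cite{DBLP:journals/jct/JohnsonRST01}).

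First, I would compute the SCC decomposition of $G$. The global problem reduces, via polynomial-time dynamic programming on the condensation DAG (tabulated over entry/exit vertices of each SCC), to computing the longest simple $u$-$v$ path within each SCC $S$. Fix one such $S$: it is strongly connected with girth $\geq g$.

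For each such $S$, apply a directed Erd\H{o}s--P\'osa-type dichotomy (Reed, Robertson, Seymour, and Thomas) with threshold $k+1$: either (A) $S$ contains $k+1$ pairwise vertex-disjoint directed cycles, or (B) $S$ admits a DFVS of size at most $f(k)$ for some function $f$; constructive polynomial-time versions are known for each fixed $k$. In Case~(A), I claim that $S$ already contains a simple path of length $\geq gk$, and we immediately return YES: each of the $k+1$ cycles has length $\geq g$ by the girth hypothesis, and we chain them by iteratively traversing most of each cycle as a simple sub-path (contributing $\geq g-1$ edges) and using strong connectivity of $S$ to join successive cycles via intermediate simple paths that avoid already-visited vertices---yielding total length $\geq (k+1)(g-1) + k \geq gk$. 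In Case~(B), the XP algorithm for {\sc Longest Path} parameterized by DFVS (adapted to fixed endpoints) computes the exact longest $u$-$v$ path in $n^{O(f(k))}$ time per pair, and the DP across SCCs then combines these values to decide the global instance.

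The main obstacle is the construction in Case~(A): ensuring that the ``avoidance'' connector paths between successive cycles always exist as the forbidden set of already-used vertices grows. I plan to address this by ordering the cycles and selecting their entry/exit vertices carefully---e.g., via a greedy extremal argument driven by BFS from the current endpoint to the nearest unused cycle---and by exploiting the strong connectivity of $S$ to re-route around used vertices, possibly invoking a strengthened variant of the directed Erd\H{o}s--P\'osa theorem tailored to strongly connected ambient graphs so that the connectors can be found together with the packing.
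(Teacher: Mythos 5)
Your Case~(B) is fine (bounded DFVS number gives bounded directed treewidth, so the XP longest-path algorithm applies, and the DP over the condensation DAG is routine), but Case~(A) contains a fatal gap: the existence of $k+1$ vertex-disjoint directed cycles of length at least $g$ in a strongly connected digraph does \emph{not} imply the existence of a simple path of length $g\cdot k$. Here is a concrete counterexample. Take a ``hub'' vertex $h$ and $k+1$ pairwise vertex-disjoint directed cycles $C_1,\dots,C_{k+1}$, each of length $g$, with the vertices of $C_i$ labelled $c_{i,1},\dots,c_{i,g}$ along the cycle. Add the edges $(h,c_{i,1})$ and $(c_{i,g},h)$ for every $i$. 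This digraph is strongly connected and has girth $g$ (the only cycles through $h$ have length $g+1$). It contains $k+1$ vertex-disjoint cycles, so your dichotomy lands in Case~(A) and your algorithm reports YES. However, any simple path uses $h$ at most once and hence touches at most two of the cycles $C_i$; its length is at most $2g$. For $k\ge 3$ this is a No-instance, so the algorithm is incorrect. The point is that a cycle packing is too weak an obstruction: the cycles may all funnel through a tiny separator, which is exactly the situation that keeps directed treewidth small. Strong connectivity of the ambient graph does not help, and no ``strengthened Erd\H{o}s--P\'osa tailored to strongly connected graphs'' can salvage the claim, because the counterexample itself is strongly connected. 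The obstruction you actually need is one in which the disjoint cycles are connected to each other in a controlled, nested fashion permitting a snake-like traversal; this is precisely the cylindrical wall delivered by the Directed Grid Theorem, which is what the paper uses in place of plain Erd\H{o}s--P\'osa. In the wall of order $2k+1$ one can always stitch together large arcs of the $k+1$ odd-indexed concentric cycles via the radial subdivided paths, giving a path of length at least $g\cdot k$ (Lemma~\ref{lemma:path-length-gk}); a bare cycle packing gives no comparable guarantee.
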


\mic{Therefore,} the classification of this problem is resolved \mic{as well}.
Notice that this theorem also easily implies that when we ask for a path of length at least $g+k$, the problem is in FPT 
%(and, furthermore, 
\mic{(in fact,} solvable in time $2^{O(k)}\cdot n^{O(1)}$) as follows. If $k\geq g$, then we can simply run some known $2^{O(k)}\cdot n^{O(1)}$-time algorithm for {\sc Longest Path} parameterized by the solution size. Otherwise, when $k<g$, we create $n$ instances of {\sc Longest Path Above Girth} with parameter $2$ (thus, solvable in polynomial time by our theorem), one for each vertex~$v$: Replace $v$ by a path of length $g-k$ whose start vertex has all in-neighbors of $v$ as its in-neighbors and whose end vertex has all out-neighbors of $v$ as its out-neighbors, and take the disjoint union of the resulting graph and a cycle of length $g$; it is easy to see that we should return Yes if and only if the answer to at least one of these $n$ instances is Yes.

Other above/below guarantee versions of {\sc Longest Cycle (Path)} were also considered in the literature; see~\cite{DBLP:journals/siamdm/BezakovaCDF19,DBLP:conf/esa/FominGSS22,gutin2022survey,DBLP:conf/wg/Jansen0N19,DBLP:conf/soda/FominGSS22,DBLP:conf/stacs/FominGLSS022,DBLP:conf/sosa/HatzelMPS23,DBLP:journals/corr/abs-2301-06105}. Remarkably, while the parameterized complexity of all of these versions is quite well understood for undirected graphs, the resolution of the parameterized complexity of most of these versions has been (sometimes repeatedly) asked as  an open question for directed graphs, where only little is known. It is conceivable that our contributions will shed light on these open questions as well in future works. 

\subsection{Techniques}
\subparagraph*{Hamiltonian Cycle By DFVS.}
Before describing the ideas behind our reduction, %for {\sc Hamiltonian Cycle By DFVS}, 
we provide some background about the related {\sc Disjoint Paths} problem.
Here, we are given a graph with $k$ vertex pairs $(s_i,t_i)$ and the goal is to determine whether there exists $k$ vertex-disjoint\footnote{Another studied variant involves finding edge-disjoint paths. While these two problems behave differently in some settings, 
% in this short exposition
%focus only on the vertex-disjoint setting.} 
this is not the case in our context.} paths such that the $i$-th path connects $s_i$ to $t_i$.  
While {\sc Disjoint Paths} is \mic{famously} known to be FPT  (parameterized by $k$) on undirected graphs~\cite{robertson1995graph}, the problem becomes NP-hard on directed graphs already for $k=2$~\cite{FORTUNE1980111}.
What is interesting for us is that when the input is restricted to be a DAG, then {\sc Disjoint Paths} can be solved by an XP algorithm~\cite{FORTUNE1980111}, but it is unlikely to admit an FPT algorithm as the problem is W[1]-hard~\cite{Slivkins10}.

The latter result suggests a starting point for extending the hardness to our problem.
A~simple idea to design a reduction from {\sc Disjoint Paths} on DAGs %to {\sc Hamiltonian Cycle By DFVS} 
is to just insert edges $t_1 \to s_2, t_2 \to s_3, \dots, t_k \to s_1$.
Then the set $\{s_1,\dots,s_k\}$ becomes a~DFVS and the existence of $k$ disjoint $(s_i,t_i)$-paths implies the existence of a long cycle.
There is a catch, though.
This construction might turn a No-instance into an instance with a long cycle because the cycle might traverse the  vertices $s_i, t_i$ in a different order, corresponding to a family of disjoint paths which does not form a solution to the original instance.
To circumvent this, we open the black box and
give a reduction directly from the basic W[1]-complete problem---{\sc Multicolored Clique}---while adapting some ideas from the hardness proof for 
 {\sc Disjoint Paths} on DAGs by Slivkins~\cite{Slivkins10}.
 %\todom{It is actually Multicolored Clique but I prefer to refer to the better known problem here. Also I dont want to spoil the line break. MZ: line break not relevant to new format, so added Multicolored.}

 The construction by Slivkins uses two kinds of gadgets.
 First, for each $i = 1, \dots, k$ one needs a {\em choice gadget} comprising two long directed paths with some arcs from the first path to the second one.
 The solution path corresponding to the $i$-th choice gadget must choose one of these arcs to ``change the lane''; the location of this change encodes the choice of the $i$-th vertex in the clique.
 Next, for each pair $(i,j)$, where $1 \le i < j \le k$, one needs a {\em verification gadget} to check whether the $i$-th and $j$-th chosen vertices are adjacent.
 The corresponding solution path must traverse the $i$-th and $j$-th choice gadgets around the locations of their transitions.
 The arcs between the choice gadgets are placed in such a way that both these locations can be visited only when the corresponding edge exists in the original graph. 

 Our reduction relies on a similar mechanism of choice gadgets, formed by $k$ directed paths, \mic{corresponding to $k$ colors in the {\sc Multicolored Clique} instance.}
 \mic{The $i$-th path is divided into blocks corresponding to the vertices of color $i$.}
 We enforce that a Hamiltonian cycle must omit exactly one block when following such a path, thus encoding a choice of~$k$ vertices.
 The arcs between blocks from different paths encode the adjacency matrix of the original graph. %, thus providing a verification mechanism.
 %Still, there is a lot of freedom on how a Hamiltonian cycle might look like in such graph.
 We would like to guarantee that any Hamiltonian cycle $C$ visits each omitted block $k-1$ times, utilizing the arcs mentioned above.
 To ensure this, we need to allow $C$ to ``make a step backward'' on the directed path during such a visit.
 On the other hand, we cannot afford to create too many disjoint cycles because DFVS should have size bounded in terms of~$k$.
 %This however would lead to creation of $\Omega(n)$ disjoint cycles
 %We add several features to this construction to impose
 %a rigid structure on any potential Hamiltonian cycle $C$.
As a remedy, we attach $k-1$ long cycles to each choice gadget, which are connected to every block in a certain way (see \Cref{fig:ham-dfvs-1} on page \pageref{fig:ham-dfvs-1}).
Using the fact that every vertex in the gadget must be visited by $C$ at some point,
%that allow $C$ to ``make a step backward'' on the directed path.
%Because every vertex on each long cycle and each long path must be visited by $C$ at some point,
we prove that each long cycle is entered exactly once and each choice gadget is entered exactly $k$ times in total, imposing a rigid structure on a solution.
As a result, we get rid of the flaw occurring in the naive construction
and obtain an equivalent instance with DFVS number $\mathcal{O}(k^2)$.

\subparagraph*{Longest Cycle (Path) Above Girth.}
Having established our main hardness result, it is easy to extend it to {\sc Longest Cycle (Path) Above Girth}.
We replace each vertex $v$ in the DFVS by two vertices $v_{in}$ and $v_{out}$, splitting the in- and out-going arcs of $v$ between them, and insert a long path from $v_{in}$ to $v_{out}$.
This path is set sufficiently long to make the girth $g$ of the graph comparable to its size.
Consequently, the original graph $G$ is Hamiltonian if and only if the second one has a cycle of length $g \cdot (\mathsf{dfvs}(G) + 1)$.
So, this problem is~also~W[1]-hard.

To design an XP algorithm for {\sc Longest Path Above Girth}, we follow the win-win approach employed by the FPT algorithm for the undirected case~\cite{DBLP:journals/toct/FominGLPSZ21}, which relies on a certain version of the Grid Minor Theorem.
In general, the theorem states that either a graph contains a $k \times k$-grid as a minor, or its treewidth is bounded by $f(k)$, for some (in fact, polynomial) function $f$.
In the first scenario, a sufficiently long path always exists, whereas in the second scenario, the problem is solved via dynamic programming on the~tree~decomposition.

We take advantage of the directed counterpart of the Grid Minor Theorem: either a~digraph contains a subgraph isomorphic to a subdivision of the order-$\OO(k)$ cylindrical wall or its directed treewidth is bounded by $f(k)$, for some function $f$~\cite{kawarabayashi2015directed}.
We prove that in the first scenario again a sufficiently long path always exists (so we obtain a Yes-instance), while in the second scenario we can utilize 
\mic{the known algorithm to compute the longest path in XP time with the help of the directed tree decomposition.
Finally, let us remark that this argument does not extend to {\sc Longest Cycle Above Girth} because we cannot guarantee the existence of a sufficiently long cycle in a subdivision of a cylindrical wall.
We leave it open whether this problem belongs to XP as well.}

\section{Preliminaries}\label{section:prelims}
\subparagraph*{General Notation.}
%Let $\mathbb{N}_0 = \mathbb{N} \cup \{0\}$, where $\mathbb{N}$ is the set of natural numbers. 
For an integer $r$, let $[r] = \{1,\ldots,r\}$, and for integers $r_1, r_2$,  let $[r_1,r_2] = \{r_1,r_1+1, \ldots,r_2\}$.
We refer to \cite{DBLP:books/sp/CyganFKLMPPS15} for standard definitions in Parameterized Complexity. 
%Given a finite set $A$ and an integer $t \in \mathbb{N}_0$,  let $2^{A}$ denote the family of all subsets of $A$, and let ${{A}\choose{t}}$ and ${{A}\choose{\leq t}}$ denote the family of all subsets of $A$ size exactly $t$ and at most $t$, respectively. 
\subparagraph*{Directed Graphs.}
We use standard graph theoretic terminology from Diestel's book~\cite{diestel-book}. 
%For a directed or undirected graph $G=(V,E)$, let $V(G)$ denote the set of vertices of $G$, and $E(G)$ the set of edges of $G$. When $G$ is clear from context, let $n = |V(G)|$ and $m = |E(G)|$.  For a set $X \subseteq G$, let $G[X]$ denote the subgraph of $G$ induced on the vertex set $X$, and $G - X$ denote the subgraph of $G$ induced on the vertex set $V(G) \setminus X$. 
%In this paper, 
We work with simple directed graphs (digraphs) where the edges are given by a set $E$ of ordered pairs of vertices.
For an edge $e = (u, v)$ in a digraph $G$, we say that $u$ is an {\em in-neighbor} of $v$ and $v$ is an {\em out-neighbor} of $u$. We refer to $e$ as an {\em incoming edge of $v$} and an {\em outgoing} edge of $u$. 
%A {\em source} is a vertex %of a digraph  with no incoming edges, and a {\em sink} is a vertex %of a digraph 
%with no outgoing edges.
%
%For a subset $S \subseteq V(G)$, we define $\partial(S)$ as the set of edges that have exactly one endpoint in $S$. 
For a vertex set $S \subseteq V(G)$ we define  $\partial^{out}(S) = \{(u,v) \in E(G) \colon u \in S\}$, $\partial^{in}(S) = \{(u,v) \in E(G) \colon v \in S\}$, and $\partial(S) = \partial^{out}(S) \cup \partial^{in}(S)$.
%Also for a directed graph $G$, define $\partial^{out}(C)$ as the set of edges in $G$ that have exactly one endpoint in $C$ and the edge is outgoing from the endpoint, and define $\partial^{in}(C)$ as the set of edges in $G$ that have exactly one endpoint in $C$ and the edge is incoming from the endpoint. 
For two vertex sets $A, B \subseteq V(G)$ we write $E(A,B)$ for $\partial(A) \cap \partial(B)$, that is, the set of edges with one endpoint in $A$ and the other in $B$.
We use this notation only when the digraph $G$ is clear from the context.
When $C$ is a subgraph of $G$ we abbreviate $\partial(C) = \partial(V(C))$ and likewise for the remaining notation.
A digraph $G$ is {\em isomorphic} to a digraph $H$ if there is a bijection $f : V ( G ) \rightarrow V ( H )$ such that for any $u,v \in V(G)$ we have $(u,v) \in E(G)$ if and only if $(f(u), f(v)) \in E(H)$.
%
%A {\em walk} is a sequence of vertices $v_0, \ldots , v_k$ such that for every $i \in \{0, \ldots , k - 1\}$ we have $(v_i, v_{i+1}) \in E(G)$. The walk is {\em closed} if $v_0 = v_k$.
%A {\em path} is a walk where no vertex appears more than once. A {\em cycle} is a closed walk where no vertex appears more than once, apart from $v_0$ and $v_k$ (which are the same vertex).
%A {\em path} is a sequence of non-repeating vertices with the property that each vertex in the sequence is adjacent to the vertex preceding it (except the first vertex). 
%The {\em size} and
The {\em length} of a path (or a cycle) $P$  is the number of %vertices and 
edges in $P$.
When the first and last vertices of a path $P$ are $s$ and $t$, respectively, we call it an {\em $(s,t)$-path}.
A cycle in $G$ is called {\em Hamiltonian} if it visits all the vertices of $G$.
%respectively, and it is denoted by $|V(P)|$ and $|E(P)|$ respectively. 
The {\em girth} of a digraph $G$ is the shortest length of a cycle in $G$.  %If an $(s,t)$-path $P$ has the shortest length among all the $(s,t)$-paths in the graph $G$, we call $P$ a {\em shortest} $(s,t)$-path in $G$ (or just a shortest $(s,t)$-path if $G$ is clear from context).
%
%For a path $P$, let us denote the subpath between any two vertices $u_1$ and $u_2$ of $P$ as $P_{u_1, u_2}$. We say that the walk formed by two paths $P_1$ and $P_2$ where the last vertex of $P_1$  is equal to the first vertex of $P_2$ is the {\em concatenation} of  $P_1$ and $P_2$, denoted as $P_1 \circ P_2$. A path $P$ is called the concatenation of paths $P_1, P_2, \dotsc, P_q$ if $P = (P_1 \circ P_2) \circ P_3) \dotsc ) \circ P_k$.
%A vertex $t$ is {\em reachable} from a vertex $s$ in $G$ if there exists a path in $G$ that starts with $s$ and ends with $t$. The {\em distance} between two vertices $s$ and $t$ in $G$, denoted by $dist_G(s,t)$, is the number of edges in a shortest $(s,t)$-path in $G$ (defined as $\infty$ if no such path exists). When $G$ is clear from context, we drop the subscript $G$ from $dist_G(s,t)$.
%
%A {\em tree} is an undirected graph where any two vertices are connected by exactly one path. We also refer to a directed graph whose underlying undirected graph is a tree as a tree. A {\em rooted tree} $T$ is a tree in which one vertex $r \in V(T)$ has been designated the root. In a rooted tree, the {\em parent} of a vertex $v$ is the vertex adjacent to $v$ on the path from $v$ to the root. A {\em child} of a vertex $v$ is a vertex of which $v$ is the parent. An {\em ancestor} of a vertex $v$ is any vertex which is either the parent of $v$ or is (recursively) the ancestor of the parent of $v$. A {\em descendant} of a vertex $v$ is any vertex which is either a child of $v$ or is (recursively) a descendant of any of the children of $v$. 
For a rooted tree $T$ and a node $t \in V(T)$, $T_t$ denotes the subtree of $T$ rooted at $t$. % as the root and whose vertex set consists of $t$ and all its descendents in $T$, and it is called the {\em subtree} of $T$ rooted at $t$. 
By orienting each edge in a rooted tree from a parent to its child we obtain an {\em arborescence}.
%An {\em arborescence} is a rooted tree in which every edge is oriented from a parent to its child. 

\subparagraph*{Directed Treewidth.}
%We start with the definition of a directed tree decomposition.
We move on to the directed counterparts of treewidth and grids.
\begin{definition}[Directed Tree Decomposition~\cite{kawarabayashi2015directed}]\label{def:directedTw}
A {\em directed tree decomposition} of a directed graph $G$ is a triple $(T,\beta,\gamma)$, where $T$ is an arborescence, and $\beta : V(T) \rightarrow 2^{V(G)}$ and $\gamma : E(T) \rightarrow 2^{V(G)}$ are functions such that
\begin{enumerate}
\item $\{\beta(t) : t \in V (T)\}$ is a partition of $V(G)$ into (possibly empty) sets, and
\item  if $e=(s,t) \in E(T)$, $A = \bigcup \{\beta(t') : t' \in V(T_t)\}$ and $B = V(G) \setminus A$, then there is no closed (directed) walk in $G - \gamma(e)$ containing a vertex in $A$ and a vertex in $B$.
\end{enumerate}

For $t \in V(T)$, define $\Gamma(t) := \beta(t) \cup \bigcup \{\gamma(e) : $ $e$ is incident with $t\}$. Moreover, define $\beta(T ) := \bigcup \{\beta(t') : t' \in V(T )\}$.

The {\em width} of $(T, \beta, \gamma)$ is the least integer $w$ such that $|\Gamma(t)| \leq w + 1$ for all $t \in V(T )$. The {\em directed treewidth} of $G$ is the least integer $w$ such that $G$ has a directed tree decomposition of width $w$.
\end{definition}

%We now define the well-known operation of subdivision.

\begin{definition}[Subdivision]
For a directed graph $G$, the {\em edge subdivision} of  $(u,v) \in E(G)$ is the operation that removes $(u,v)$ from $G$ and  inserts two edges $(u,w)$ and $(w,v)$ with the new vertex $w$. A graph derived from $G$ by a sequence of edge subdivisions is a {\em subdivision}~of~$G$.
\end{definition}

We now define special graphs called cylindrical grids and cylindrical walls (see Fig.~\ref{fig:cylindrical-grid-and-wall}).

\begin{figure}
\begin{center}
\includegraphics[scale=0.45]{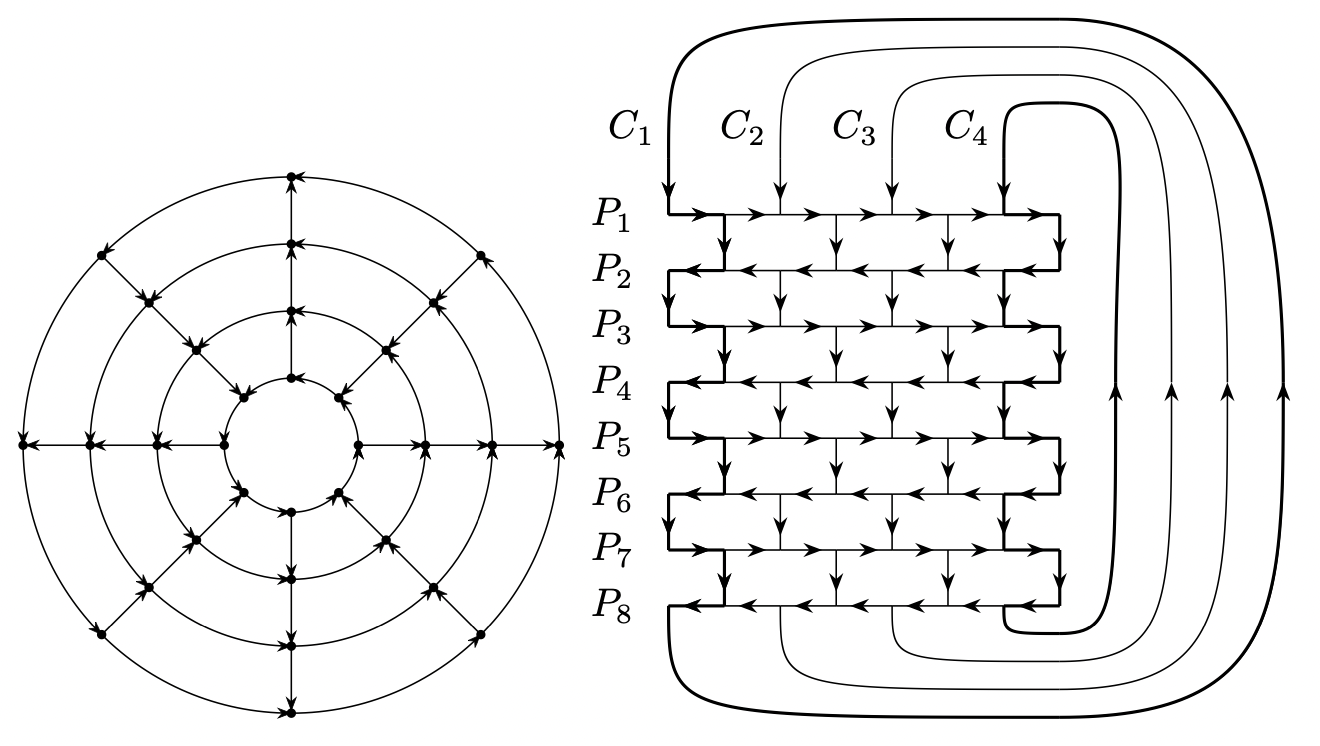}  
\end{center}
\caption{A cylindrical grid and a cylindrical wall of order $4$. The figure is taken from \cite{kawarabayashi2015directed}.}\label{fig:cylindrical-grid-and-wall}
\end{figure}
%\todo[inline]{Figure taken from Directed Grid Theorem paper, need to create our own}

\begin{definition}[Cylindrical Grid and Cylindrical Wall~\cite{campos2019adapting}]\label{definition:cylindrical-grid-wall}
A {\em cylindrical grid of order $k$}, for some $k \geq 1$, is the directed graph $G_k$ consisting of $k$  pairwise vertex disjoint directed cycles $C_1, \ldots, C_k$, together with $2k$ pairwise vertex disjoint directed paths $P_1 \ldots, P_{2k}$ such that: 
\begin{itemize}
\item for $i \in [k]$, $V (C_i) = \{v_{i,1}, v_{i,2}, \dotsc , v_{i,2k}\}$ and $E(C_i) = \{(v_{i,j} , v_{i,j+1}) |\ j \in [2k - 1]\} \cup \{(v_{i,2k}, v_{i,1})\}$
\item for $i \in \{1, 3, 5, \dotsc , 2k - 1\}$, $E(P_i) = \{(v_{1,i}, v_{2,i}), (v_{2,i}, v_{3,i}), \dotsc , (v_{k-1,i}, v_{k,i})\}$,
and
\item for $i \in \{2, 4, 6, \dotsc , 2k\}$, $E(P_i) = \{(v_{k,i}, v_{k-1,i}), (v_{k-1,i}, v_{k-2,i}), \dotsc , (v_{2,i}, v_{1,i})\}$.

%\item each path $P_i$ has exactly one vertex in common with each cycle $C_j$,
%\item the paths $P_1, \ldots, P_{2k}$ appear on each $C_i$ in this order,
%\item for odd $i$ the cycles $C_1 \ldots, C_k$ occur on all $P_i$ in this order and for even $i$ they occur in the reverse order $C_k, \ldots, C_1$.
\end{itemize} 
A {\em cylindrical wall of order $k$} is the directed graph $W_k$ obtained from the cylindrical grid $G_k$ by splitting each vertex of degree $4$ as follows: we replace $v$ by two new vertices $v_{in}, v_{out}$ and an edge $(v_{in}, v_{out})$ so that every edge $(u,v) \in E(G_k)$ is replaced by an edge $(u,v_{in})$ and every edge $(v, u) \in E(G_k)$ is replaced by an edge $(v_{out}, u)$.
%concentric directed cycles with $2k$ paths connecting the cycles in alternating directions.
\end{definition}

%See Figure \ref{} for an example of a cylindrical grid of order $k$.
%\todo[inline]{Add Figure}
Note that in the cylindrical grid $G_k$, the path $P_i$ is oriented from the first cycle to the last one if $i$ is odd, and from the last cycle to the first if $i$ is even (see Figure \ref{fig:cylindrical-grid-and-wall}). We have the following theorem for directed graphs, which will be helpful in designing our algorithm in Section \ref{section:fpt-long-path-above-girth} using a win/win approach. % from its two cases.

\begin{theorem}[Directed Grid Theorem~\cite{kawarabayashi2015directed}]\label{theorem:directed-grid-theroem}
There is a function $f: \mathbb{N} \rightarrow \mathbb{N}$ such that 
for every fixed $k \in \mathbb{N}$, when given a directed graph $G$, in polynomial time we can compute either:
\begin{enumerate}
\item 
%a cylindrical grid $Q_k$ of order $k$ that is a butterfly minor of $G$ and 
%a cylindrical wall $W_k$ of order $k$ such that  there exists a subdivision of $W_k$ isomorphic to a subgraph of $G$,
a subgraph of $G$ that is isomorphic to a subdivision\footnote{The theorem in~\cite{kawarabayashi2015directed} states that the cylindrical wall of order $k$ is obtained as a topological minor of $G$. For any topological minor $H$ of $G$, there exists a subdivision of $H$ isomorphic to a subgraph of $G$.} of $W_k$ or
\item a directed tree decomposition of $G$ of width at most $f(k)$.
\end{enumerate}
%such that every digraph of directed treewidth at least $f(k)$ contains a cylindrical grid of order $k$ as a butterfly minor.
\end{theorem}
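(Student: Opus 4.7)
The plan is to follow the win-win approach sketched in the Techniques section, using the Directed Grid Theorem (Theorem~\ref{theorem:directed-grid-theroem}) as the main tool. Given a digraph $G$ with girth $g$ and parameter $k$, first compute $g$ in polynomial time. Then apply the algorithm of Theorem~\ref{theorem:directed-grid-theroem} with order parameter $h(k) = 2k + c$ for a small constant $c$ to be fixed once the path-construction below is made precise. This produces in polynomial time either (1) a subgraph $W' \subseteq G$ isomorphic to a subdivision of the cylindrical wall $W_{h(k)}$, or (2) a directed tree decomposition of $G$ of width at most $f(h(k))$.

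In case (2) the directed treewidth of $G$ is bounded by a function of $k$, so I invoke the known XP algorithm for the longest path problem parameterized by directed treewidth. Such an algorithm follows from the dynamic-programming scheme of Johnson et al.~\cite{DBLP:journals/jct/JohnsonRST01} and Hunter--Kreutzer~\cite{DBLP:journals/tcs/HunterK08}, which handles {\sc Longest Cycle} and is easily adapted to the path variant (for instance, by adding a fresh vertex with edges to and from every vertex of $G$, which increases the directed treewidth by at most one and reduces {\sc Longest Path} on $G$ to {\sc Longest Cycle} on the augmented graph). The resulting running time is $n^{\gamma(k)}$ for some computable function $\gamma$, which is within the XP bound.

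The heart of the argument is case (1): I need to show that any subgraph of $G$ isomorphic to a subdivision of $W_{h(k)}$ already contains a directed path of length at least $g \cdot k$, so the instance is automatically a Yes-instance and the algorithm can simply answer Yes. The crucial observation is that, since $W'$ is a subgraph of $G$, every cycle of $W'$ has length at least $g$; in particular, the $h(k)$ pairwise vertex-disjoint subdivided horizontal cycles $C_1', \dots, C_{h(k)}'$ of $W'$ (corresponding to $C_1, \dots, C_{h(k)}$ in $W_{h(k)}$) each have length at least $g$. I plan to construct a long directed path that uses nearly the full length of each $C_i'$, connected by short segments along subdivided vertical paths. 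Concretely, start near the subdivided image of a vertex $v_{1,j}$ of $C_1$ with $j$ odd (so that the adjacent $P_j$ runs downward in $W_{h(k)}$); traverse almost all of $C_1'$, then descend via the subdivided segment of $P_j'$ from $C_1'$ to $C_2'$, traverse almost all of $C_2'$ aiming for a different odd column $j'$, and so on. Using distinct odd columns for successive descents keeps the constructed walk simple and uses only the downward-oriented vertical paths. Each $C_i'$ then contributes at least $g - O(1)$ edges (only a handful of subdivided vertices around the vertical entry/exit points are skipped), producing a path of length at least $h(k)\cdot (g - O(1)) \ge g\cdot k$ for an appropriate choice of $h(k) = \Theta(k)$.

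The main obstacle is turning this geometric intuition into a rigorous argument, correctly handling the alternating orientations of odd and even vertical paths in $W_{h(k)}$, the split of each degree-$4$ vertex introduced in the wall's definition, and the edge-subdivisions themselves; in particular I must ensure that the partial traversal of each $C_i'$ actually reaches the intended exit vertex without having already visited it, and that the concatenation of per-cycle arcs and per-descent segments forms a single simple directed path. Once this combinatorial lemma is established, combining cases (1) and (2) with the polynomial-time preprocessing yields the claimed XP algorithm for {\sc Longest Path Above Girth}.
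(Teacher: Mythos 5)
There is a fundamental mismatch here: the statement you were asked to prove is the Directed Grid Theorem itself (Theorem~\ref{theorem:directed-grid-theroem}), a deep structural result of Kawarabayashi and Kreutzer that the paper does not prove at all but imports by citation --- the only original content in the paper's statement is the footnote observing that if the cylindrical wall $W_k$ is a topological minor of $G$, then $G$ contains a subgraph isomorphic to a subdivision of $W_k$. Your proposal does not address this statement; instead it sketches a proof of \Cref{thm:LongPathXP} (the XP algorithm for \LongPathG{}), and it does so by \emph{invoking} Theorem~\ref{theorem:directed-grid-theroem} as its main tool. Relative to the statement in question this is circular: nothing in your argument explains why a digraph containing no subdivision of $W_k$ as a subgraph must have directed treewidth at most $f(k)$, nor how to compute, in polynomial time, either the wall subdivision or the bounded-width directed tree decomposition. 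Establishing that dichotomy is precisely the content of the theorem and requires entirely different machinery (well-linked sets, brambles/havens, and the duality with directed treewidth developed in the cited work); it cannot be derived from a win-win application of itself.

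As an aside, had the target been \Cref{thm:LongPathXP}, your sketch follows essentially the paper's route (apply the grid theorem with wall order $\Theta(k)$; in the low-treewidth case use an XP algorithm for {\sc Longest Path} parameterized by directed treewidth; in the wall case extract a path of length at least $g\cdot k$ by traversing most of each concentric cycle and hopping between consecutive cycles). The paper makes the wall case precise via the notion of segments (\Cref{definition:cycle-segment}, Observation~\ref{observation:ci-segments}, Lemmas~\ref{lemma:segment-length} and~\ref{lemma:path-length-gk}), using $W_{2k+1}$ and exploiting that only every other cycle needs to contribute length roughly $g$, which avoids the ``$g-O(1)$ per cycle'' bookkeeping your sketch leaves open; it also uses the result of de Oliveira Oliveira (\Cref{theorem:longest-path-directed-treewidth}) for {\sc Longest Path} directly, rather than an ad hoc reduction to {\sc Longest Cycle}. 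But none of this repairs the gap above: the Directed Grid Theorem itself remains unproved in your proposal.
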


%\medskip \noindent{\bf Problem Definitions and Some Known Results.}
%We now define the problems relevant to our work. Here, the graph $G$ could be undirected or directed, unless stated otherwise.

\section{Hardness of {\HamDFVS}}

This section is devoted to the proof of \Cref{thm:HamCycleDFVS}. % and establish W[1]-hardness  of {\HamDFVS}.
It is based on a parameterized reduction from
{\sc Multicolored Clique}, defined as follows.

\defparproblem{{\sc Multicolored Clique}}{A graph $G = (V, E)$, an integer $k$, and a partition  $(V^1, V^2, \dotsc ,V^k)$ of $V$.}{$k$}{Is there a clique of size $k$ with a vertex from each $V^i$, $i \in [k]$?}

This problem is well-known to be W[1]-hard~\cite[Theorem 13.7]{DBLP:books/sp/CyganFKLMPPS15}. For an instance $(G, (V^1, V^2, \dotsc ,V^k))$ of {\sc Multicolored Clique}, we construct an instance $(G',X)$ of {\HamDFVS} as follows.

\subparagraph*{Construction of $G'$.} For $i \in [k]$ we construct a directed path $P^i$ corresponding to $V^i$ as follows. Let us fix an arbitrary ordering $<_i$ of the vertices in $V^i$, and accordingly, denote $V^i = \{v_1, v_2, \dotsc v_{|V_i|}\}$. To each vertex $v \in V^i$
%\todom{Since $i$ is fixed in $V^i$ maybe refer to this vertex as just $v$, and to its path as $Q_v$}
we associate a directed path $P_{v}$ on $2k$ vertices. We let $v^{\lef}$ and $v^{\ri}$ denote the first and last vertices of $P_{v}$, respectively. 
We refer to the $2(k-1)$ internal vertices of $P_{v}$ as $v^{1,\out}, v^{1,\inn}, v^{2,\out}, \dotsc, v^{i-1,\out}$, $v^{i-1,\inn}, v^{i+1,\out}, v^{i+1,\inn}, \dotsc, v^{k,\out}, v^{k,\inn}$
%according to their order of appearance on the path (from $v^{\lef}$ to $v^{\ri}$) 
(note that the index $i$ is avoided).  
The directed path $P^i$ is the concatenation of these paths: $P_{v_1} \rightarrow P_{v_2} \rightarrow, \dotsc, \rightarrow P_{v_{|V^i|}}$. %, in the order given by $<_i$. % (from small to large). %Thus, we have $v^{i,\ri}_x = v^{i,\lef}_{x+1}$ for all $1 \leq x < |V^i|$.
 %\todom{Suggestion: when $H$ is a subgraph of $G'$ with many copies, give names to vertices of $H$: e.g., $x$, and refer to its copy in $H_i$ as $H_i[x]$}

For every $i \in [k]$
 we create a ``universal'' vertex $u^i$ and insert edges $(u_i, v^\lef)$, $(u_i, v^\ri)$, $(v^\lef, u_i)$, $(v^\ri, u^i)$ for all $v \in V^i$.
 %from endpoints of every path $P_{v}$,  $v \in V^i$, to $u^i$, and from $u^i$ to endpoints of every path $P_{v}$.
Next, we create $k-1$ cycles $C^{i \rightarrow j}$  %j \in \{1,2, \dotsc, i-1, i+1, \dotsc k\}$.
for $j \in [k] \setminus \{i\}$, each of length $2\cdot|V^i|$.
The vertices of $C^{i\rightarrow j}$ are $c^{i\rightarrow j,\out}_{v_1}, c^{i\rightarrow j,\inn}_{v_1},  c^{i\rightarrow j,\out}_{v_2},  c^{i\rightarrow j,\inn}_{v_2},  \dotsc, c^{i\rightarrow j,\out}_{v_{|V_i|}}, c^{i\rightarrow j,\inn}_{v_{|V_i|}}$.
%$c^{i\rightarrow j,\out}_{v_1} \rightarrow c^{i\rightarrow j,\inn}_{v_1} \rightarrow  c^{i\rightarrow j,\out}_{v_2} \rightarrow  c^{i\rightarrow j,\inn}_{v_2} \rightarrow  \dotsc \rightarrow c^{i\rightarrow j,\out}_{v_{|V_i|}} \rightarrow c^{i\rightarrow j,\inn}_{v_{|V_i|}} \rightarrow  c^{i\rightarrow j,\out}_{v_1}$. 
We insert edges from $c^{i\rightarrow j,\out}_{v}$ to $v^{j,\out}$ and  from $v^{j,\inn}$ to $c^{i\rightarrow j,\inn}_{v}$ for all $v \in V^i$. See Figure \ref{fig:ham-dfvs-1} for an illustration.

\begin{figure}
\begin{center}
\includegraphics[scale=0.4]{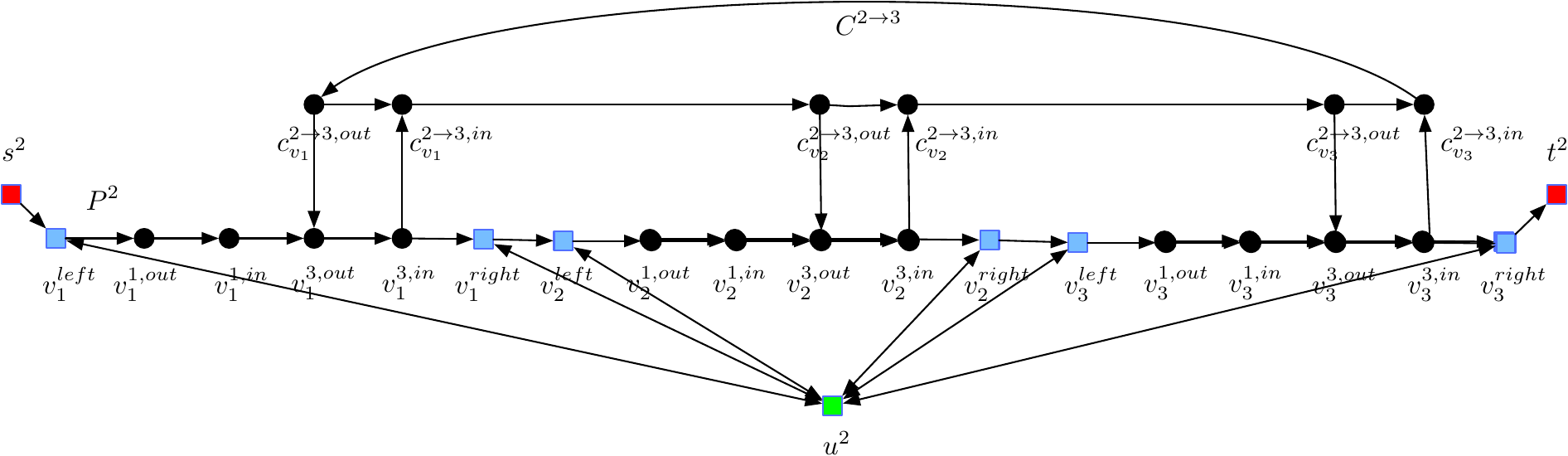}  
\end{center}
\caption{Fragment of the graph $G'$: the path $P^2$ 
(comprising subpaths $P_{v_1}, P_{v_2}, P_{v_3}$)
and the cycle $C^{2\rightarrow 3}$. %for $i,j \in [k], j \neq i$. Here, $V^2 = \{v_1, v_2, v_3\}$
}\label{fig:ham-dfvs-1}
\end{figure}
%\todom{There are still variables $c^{i\leftarrow j,\inn}_{v}$ to be fixed in Fig 3. Also, widehats are missing on terminals in Fig. 4. MZ: Ashwin, please fix this.}

We add two sets of $k$ ``terminal'' vertices: $S = \{s^1, s^2, \dotsc, s^k\}$ and $T = \{t^1, t^2, \dotsc, t^k\}$. 
For $i \in [k]$
we insert  edges from $s^i, i \in [k]$, to $v^{\lef}_1$ (being the first vertex of the path $P^i$). %, which is the left endpoint of $P^i$) .
We also add edges from $v^{\ri}_{|V^i|}$ (being the last vertex of the path $P^i$) %, which is the right endpoint of $P^i$)
to $t^i$. %, i \in [k]$. %Finally, we add edges $(t^i, s^{i+1})$ for $i \in [k-1]$.

We also create two additional sets of $\binom{k}{2}$ terminal vertices each:
$\widehat{S} = \{\widehat{s}^{i \rightarrow j} \colon 1 \le 1 < j \le k\}$
and $\widehat{T} = \{\widehat{t}^{i \rightarrow j} \colon 1 \le 1 < j \le k\}$.
%$ \widehat{S} = \{\widehat{s}^{1 \rightarrow 2}, \widehat{s}^{1 \rightarrow 3}, \dotsc, \widehat{s}^{1 \rightarrow k}, \allowbreak \widehat{s}^{2 \rightarrow 3} \dotsc, \widehat{s}^{k-1 \rightarrow k}\}$ and $ \widehat{T} = \{\widehat{t}^{1 \rightarrow 2}, \widehat{t}^{1 \rightarrow 3}, \dotsc, \widehat{t}^{1 \rightarrow k}, \widehat{t}^{2 \rightarrow 3} \dotsc, \widehat{t}^{k-1 \rightarrow k}\}$. 
For $i, j \in [k], i <j$, and $v \in V^i$, we insert an edge from $\widehat{s}^{i \rightarrow j}$ to $v^{j, \inn}$  (which belongs to $P_{v}$ and hence to $P^i$) and for $v \in V^j$, we insert an edge from $v^{i,\out}$ %(which belongs to $P_{v}$)
to $\widehat{t}^{i \rightarrow j}$. 
Furthermore, we add edges from every  $ t \in T \cup \widehat T$ to every $s \in S \cup \widehat S$.
%For $i, j \in [k]$, we also add the edge $(\widehat{t}^{i \rightarrow j}, s^{i \rightarrow j+1})$ when  $i <j < k$. Lastly, we add edges $(\widehat{t}^{i \rightarrow k}, \widehat{s}^{i+1 \rightarrow i+2})$ for $i \in [k-2]$, and the edges $(t^k, \widehat{s}^{1 \rightarrow 2})$ and $(\widehat{t}^{k-1 \rightarrow k},s^1)$, where $t^k \in T$ and $s^1 \in S$.

Finally we encode the adjacency matrix of $G$: for each edge $uv$ in $E(G)$ with $u \in V^i$ and $v \in V^j$ with $i<j$, we insert an edge from $u^{j,\out}$ to $v^{i,\inn}$. 
See Figure \ref{fig:ham-dfvs-3} for an example.

\begin{figure}[t]
\includegraphics[scale=0.4]{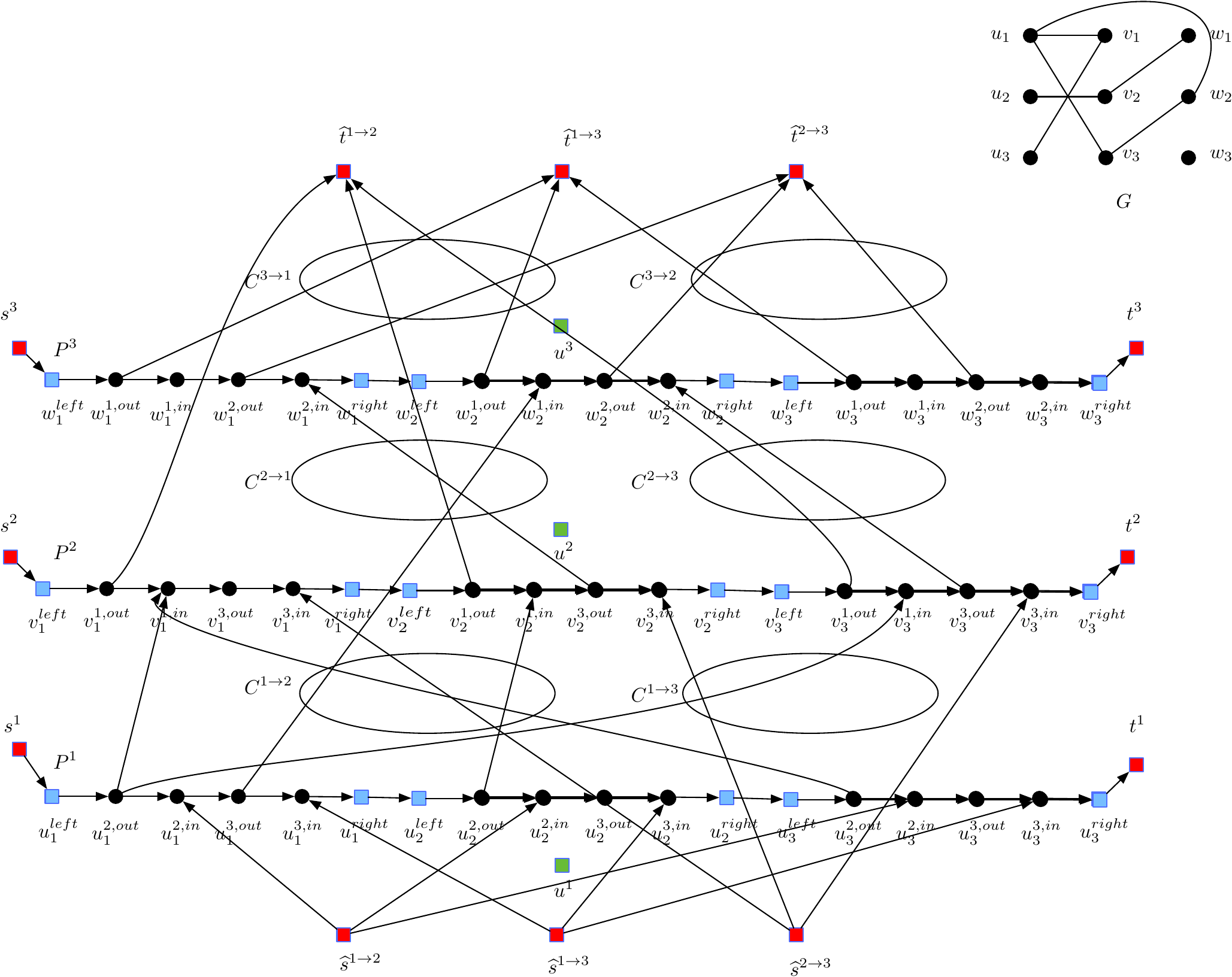} 
\caption{Graph $G$ (top right) with a clique $\{u_1,v_3,w_2\}$ and the graph $G'$ encoding $G$.
For the sake of legibility, several groups of edges are omitted in this picture:
edges incident to $u^i$ or the cycles $C^{i \rightarrow j}$ and edges going from $T \cup \widehat T$ to $S \cup \widehat S$.}
%(outgoing edges of $t^i$, incoming edges of $s^i$ and the edges incident to $u^i$ and $C^{i \rightarrow j}$, , $i, j \in [k]$ are not added).}
\label{fig:ham-dfvs-3}
\end{figure}

This concludes the construction of $G'$.
Clearly, the graph $G'$ can be computed in polynomial time when given $(G,(V^1,\dots,V^k))$.
%Note that $G'$ can be 
%We now define the DFVS $X$. 
We begin the analysis of $G'$
by showing that it has a DFVS of size $\OO(k^2)$.

\begin{lemma}\label{lemma:DFVS-k^2}
There exists a subset $X \subseteq V(G')$ such that $G'-X$ is a DAG and $|X| = k(k-1)+2k+\binom{k}{2}$.    
\end{lemma}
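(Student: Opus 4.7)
The plan is to exhibit an explicit set $X$ of the stated cardinality and then verify that $G' - X$ is acyclic by constructing an explicit topological ordering. Concretely, I would take
\[
X \;=\; \{u^i : i \in [k]\} \;\cup\; S \;\cup\; \widehat S \;\cup\; \{c^{i \to j,\out}_{v_1} : i \in [k],\ j \in [k]\setminus\{i\}\},
\]
where in the fourth family $v_1$ denotes the first vertex of $V^i$ under the ordering $<_i$. Summing the four sizes gives $k + k + \binom{k}{2} + k(k-1) = 2k + \binom{k}{2} + k(k-1)$, matching the claim.

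To see that $G'-X$ is a DAG, I would define a linear order on the remaining vertices and check that every edge points in the increasing direction. The order I have in mind is lexicographic: first by color index $i$, then by the rank $r$ of the block $P_{v_r}$ along $P^i$, and finally by position within the block. The $c$-vertices of the (now broken) cycle $C^{i \to j}$ are inserted into this order by placing $c^{i \to j,\out}_{v_r}$ directly before $v_r^{j,\out}$ and $c^{i \to j,\inn}_{v_r}$ directly after $v_r^{j,\inn}$; this is consistent because the forward arc of the truncated cycle is always $c^{i \to j,\inn}_{v_r} \to c^{i \to j,\out}_{v_{r+1}}$, crossing from block $r$ to block $r+1$. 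All vertices of $T \cup \widehat T$ are placed at the very end.

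The main thing to verify is that each edge type respects this ordering. Three observations do almost all the work. First, after deleting $S \cup \widehat S$, every vertex of $T \cup \widehat T$ becomes a sink, since its only out-edges went to $S \cup \widehat S$. Second, after deleting the universal vertices $u^i$, the only cycles of $G'$ that used them vanish. Third, after deleting $c^{i \to j,\out}_{v_1}$, the cycle $C^{i \to j}$ becomes a directed path, and the block-wise interleaving above is a valid topological ordering of $P^i \cup (C^{i \to j} - c^{i \to j,\out}_{v_1})$ by direct inspection of the edge types $c^{i \to j,\out}_v \to v^{j,\out}$, $v^{j,\inn} \to c^{i \to j,\inn}_v$, the intra-path edges of $P^i$, and the intra-cycle edges of the truncated $C^{i \to j}$. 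The remaining global edges---the inter-block edges of $P^i$, the adjacency edges $u^{j,\out} \to v^{i,\inn}$ (which go from $P^i$ to $P^j$ with $i<j$, thus increasing the leading color coordinate), and the edges from $P^i$ to $T \cup \widehat T$---all point forward as well.

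I do not expect a hard obstacle here; the main task is to keep the bookkeeping straight over the many edge types of $G'$. The only mild subtlety is to notice that a single vertex suffices per cycle $C^{i\to j}$: this requires checking that no residual cycle is created by the interactions between $C^{i \to j}$ and $P^i$, which is precisely what the interleaving above certifies.
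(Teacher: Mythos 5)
Your proposal is correct and follows essentially the same approach as the paper; the only (cosmetic) difference is that you delete $S\cup\widehat S$ so that $T\cup\widehat T$ become sinks, whereas the paper deletes $T\cup\widehat T$ so that $S\cup\widehat S$ become sources, with the remaining two components of $X$ (the $k$ universal vertices and one $c^{i\to j,\out}_{v_1}$ per cycle) identical. Your explicit interleaved topological ordering is simply the formal version of the paper's informal argument that each $L^i = P^i \cup \bigcup_{j\neq i} C^{i\to j} - X$ can be drawn with every edge vertical or horizontal-facing-right.
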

\begin{proof}
Let $Y$ be the set of vertices 
%$\{c^{i \rightarrow j,\out}_{v^1}, c^{i \rightarrow j,\out}_{v^2} \dotsc c^{i \rightarrow j,\out}_{v^k}\}$ 
$\{c^{i \rightarrow j,\out}_{v^i} \colon i,j \in [k], \, i\ne j\}$
where $v^i$ stands for the first vertex in $V^i$.
%the ordering $<_i$ of $V^i$. 
We set $X = Y \cup T \cup \widehat T \cup \{u^1, \dotsc u^k\}$ and claim that $G'-X$ is a directed acyclic graph. 
First observe that for each $i \in [k]$
the graph $L^i := P^i \cup \bigcup_{j\ne i} C^{i \rightarrow j} - X$ is acyclic because it can be drawn with each edge being either vertical or horizontal facing right (see \Cref{fig:ham-dfvs-1}).
%Note in Figure \ref{fig:ham-dfvs-1} that if we remove vertices $u^2, t^2$ and $c^{2 \rightarrow 3,\out}_{v^1}$, every remaining edge is oriented either from left to right or from bottom to top. Thus, no directed cycles can be formed out of these edges in the figure. 
The remaining edges in $G' - X$ either start at $ S \cup \widehat S$ (these vertices have no incoming edges) or go from $L^i$ to $L^j$ for $i<j$.
%This can be extended to Figure \ref{fig:ham-dfvs-3} to note that every edges are oriented either from left to right or from bottom to top after removing $X$. 
Thus $G' - X$ is a DAG. 
Finally we check that $|X| = |Y| + |T| + |\widehat T | + k = k(k-1)+2k+\binom{k}{2}$.
\end{proof}

%We define $X$ as in Lemma \ref{lemma:DFVS-k^2}.

\subparagraph*{Correctness.}
We first show that if $(G, (V^1, V^2, \dotsc ,V^k))$ is a Yes-instance then there exists a Hamiltonian cycle in $G'$.
In the following lemmas, we refer to the directed feedback vertex set $X$ from Lemma \ref{lemma:DFVS-k^2}.

\begin{lemma}\label{lemma:if-clique-then-hampath} 
If $(G, (V^1, V^2, \dotsc ,V^k))$ is a Yes-instance of {\sc Multicolored Clique} then  $(G',X)$ is a Yes-instance of {\HamDFVS}.
\end{lemma}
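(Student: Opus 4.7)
Let $\{w_1,\dots,w_k\}$ with $w_i \in V^i$ be the promised multicolored clique. My plan is to exhibit a Hamiltonian cycle in $G'$ as a concatenation of $k+\binom{k}{2}$ directed subpaths (``segments''), one starting at each vertex of $S \cup \widehat S$ and ending at each vertex of $T \cup \widehat T$, spliced together in any cyclic order using the complete bipartite set of arcs from $T \cup \widehat T$ to $S \cup \widehat S$. For each color $i \in [k]$ I would first define a \emph{color segment} from $s^i$ to $t^i$ that marches along $P^i$ in order, except that at the chosen block $P_{w_i}$ it shortcuts via $u^i$: upon reaching $w_i^{\lef}$ the segment takes the edges $w_i^{\lef} \to u^i \to w_i^{\ri}$ and then resumes $P^i$ until $t^i$. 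This segment accounts for $s^i, t^i, u^i, w_i^{\lef}, w_i^{\ri}$ and all of $P_v$ for every $v \in V^i \setminus \{w_i\}$, bypassing exactly the $2(k-1)$ internal vertices of $P_{w_i}$.

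For each pair $i<j$ I would then define a \emph{verification segment} from $\widehat s^{i \to j}$ to $\widehat t^{i \to j}$ that harnesses the clique edge $w_iw_j$ to cover the two cycles $C^{i \to j}, C^{j \to i}$ together with the four remaining block-interior vertices. Specifically the segment is
\[
\widehat s^{i \to j} \to w_i^{j,\inn} \to c^{i \to j,\inn}_{w_i} \to (\text{full lap of } C^{i \to j}) \to c^{i \to j,\out}_{w_i} \to w_i^{j,\out} \to w_j^{i,\inn} \to c^{j \to i,\inn}_{w_j} \to (\text{full lap of } C^{j \to i}) \to c^{j \to i,\out}_{w_j} \to w_j^{i,\out} \to \widehat t^{i \to j},
\]
where the adjacency arc $w_i^{j,\out} \to w_j^{i,\inn}$ is present precisely because $w_iw_j \in E(G)$ with $i<j$. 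This accounts for $\widehat s^{i\to j}, \widehat t^{i\to j}$, the four vertices $w_i^{j,\inn}, w_i^{j,\out}, w_j^{i,\inn}, w_j^{i,\out}$, and the entirety of $C^{i \to j} \cup C^{j \to i}$.

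The final task is to check that the resulting closed walk visits every vertex of $G'$ exactly once. The color segments jointly cover $S, T, \{u^i\}_i$, all non-chosen blocks of every $P^i$, and both endpoints of every chosen block, while the verification segments cover $\widehat S, \widehat T$, the $2(k-1)$ internal vertices of every $P_{w_i}$ (partitioned naturally over the $k-1$ choices of the partner index $j \ne i$), and all $k(k-1)$ cycles $C^{i \to j}$ (each handled by the unique pair $\{\min(i,j),\max(i,j)\}$). The only point I expect to deserve explicit attention is the claim that the lap around $C^{i \to j}$ started at $c^{i \to j,\inn}_{w_i}$ does visit every one of its $2|V^i|$ vertices before arriving at $c^{i \to j,\out}_{w_i}$; this is automatic once one observes that in the cyclic ordering of $C^{i \to j}$ the entry vertex $c^{i \to j,\inn}_{w_i}$ is the immediate cyclic successor of the exit vertex $c^{i \to j,\out}_{w_i}$, so that a single forward lap from the entry traverses all other cycle vertices exactly once and terminates at the exit. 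Every individual edge invoked above is immediate from the construction, so no further obstacle arises.
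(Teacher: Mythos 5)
Your proposal is correct and follows essentially the same construction as the paper's proof: color segments $s^i \to \dots \to u^i \to \dots \to t^i$ that bypass the block of the chosen vertex, verification segments $\widehat{s}^{i \to j} \to \dots \to \widehat{t}^{i \to j}$ that loop through both $C^{i \to j}$ and $C^{j \to i}$ using the adjacency arc, and splicing via the complete bipartite arcs from $T \cup \widehat T$ to $S \cup \widehat S$. You merely spell out the entry/exit vertices of each cycle and the covering bookkeeping in more explicit detail than the paper.
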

\begin{proof}
Let $\{v_1, \dotsc, v_k \}$ be a clique in $G$ with $v_i \in V^i$ for $i \in k$. 
For $i \in [k]$ we define the path $Z^i$ that starts at $s^i$, follows $P^i$ until $v_i^\lef$, visits $u^i$, and then follows $P^i$ from  $v_i^\ri$ to $t^i$.
Next, for $i<j$ we construct the path  %it begins with $(\widehat{s}^{i\rightarrow j}, v_i^{j,in})$, then it traverses the cycle $C^{i\rightarrow j}$
%Let $\ell_i$ and $r_i$ denote the predecessor and successor of $v_i$ in the ordering $<_i$ of $V^i$. Let $Y^i$ be the path from $s^i$ to $\ell_i^{\ri}$. Let $Z^i$ be the path from $r_i^{\lef}$ to $t^i$. 
%Let $Q^i = Y^i \rightarrow u^i \rightarrow Z^i$. 
$Q^{i\rightarrow j}$ as $\widehat{s}^{i\rightarrow j} \rightarrow v_i^{j,\inn} \rightarrow C^{i\rightarrow j} \rightarrow v_i^{j,\out} \rightarrow v_j^{i,\inn} \rightarrow C^{j\rightarrow i} \rightarrow v_j^{i,\out} \rightarrow \widehat{t}^{i\rightarrow j}$.
Note that one can traverse the entire cycle $C^{i\rightarrow j}$ after entering it from $v_i^{j,\inn}$ and leave towards $v_i^{j,\out}$.
The edge $(v_i^{j,\out}, v_j^{i,\inn})$ is present in $G'$ due to the encoding of the adjacency matrix of $G$.
The union of all these paths covers the entire vertex set of $G'$.
It suffices to observe that
these paths can be combined into a single cycle using the edges from $T \cup \widehat T$ to $S \cup \widehat S$.
\end{proof}

 Proving the converse of Lemma \ref{lemma:if-clique-then-hampath} is more challenging.
We will reveal the structure of a potential Hamiltonian cycle $H$ in $G'$ gradually.
First, we show that if $H$ enters
the cycle $C^{i\rightarrow j}$ through an incoming edge corresponding to a certain vertex $v \in V^i$,
then it must also leave it through an outgoing edge related to $v$, and vice versa.
A priori, there might be multiple vertices  $v \in V^i$ for which this happens.
 
 For $v \in V^i$ we denote the edge from $c^{i\rightarrow j, \out}_{v}$ to $v^{j,\out}$ by $e^{i\rightarrow j,\out}_{v}$. Similarly, let the edge from $v^{j,\inn}$ to $c^{i\rightarrow j,\inn}_{v}$ be $e^{i\rightarrow j,\inn}_{v}$. %Recall that for $C \subseteq V(G)$, $\partial(C), \partial^{out}(C)$ and $\partial^{in}(C)$ were defined in Section \ref{section:prelims}. 
 
 %We have the following lemma where we show that for any Hamiltonian cycle in $G'$, its edges that are incident to $C^{i\rightarrow j}$ is a union of pairs of edges  $\{e^{i\rightarrow j,\out}_{v}, e^{i\rightarrow j,\inn}_{v}\}$ for vertices $v \in V^i$.%we focus on its edges that have an endpoint in a cycle $C^{i\rightarrow j}$.
 %\todo{Rewrite}

\begin{lemma}\label{lemma-H-cycle-edges}
Let $H$ be a Hamiltonian cycle in $G'$ and $1\le i < j \le k$. Then $E(H) \cap \partial(C^{i\rightarrow j})$ %\todo{define $\partial$}
is of the form $\bigcup_{v \in U} \{e^{i\rightarrow j,\out}_{v}, e^{i\rightarrow j,\inn}_{v}\}$ where $U$ is some subset of $V^i$.
\end{lemma}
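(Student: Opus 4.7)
The plan is to derive the structure of $H$ on $C^{i\rightarrow j}$ via a careful degree analysis. First, I would enumerate all arcs of $G'$ incident to vertices of $V(C^{i\rightarrow j})$: inspecting the construction shows that they comprise only the $2|V^i|$ arcs of the cycle itself together with, for each $v \in V^i$, the outgoing arc $e^{i\rightarrow j,\out}_v$ from $c^{i\rightarrow j,\out}_v$ to $v^{j,\out}$ and the incoming arc $e^{i\rightarrow j,\inn}_v$ from $v^{j,\inn}$ to $c^{i\rightarrow j,\inn}_v$. In particular, no edge incident to $S \cup T \cup \widehat S \cup \widehat T$, no universal-vertex edge, and no adjacency-matrix edge touches a cycle vertex, since all of those only attach to vertices of the paths~$P_v$.

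Enumerating the cycle arcs in their natural order $c^{i\rightarrow j,\out}_{v_l} \to c^{i\rightarrow j,\inn}_{v_l} \to c^{i\rightarrow j,\out}_{v_{l+1}}$ (indices modulo $|V^i|$), I would observe that every vertex $c^{i\rightarrow j,\out}_v$ has in-degree exactly $1$ in $G'$, namely along the incoming cycle arc from the previous block, and every vertex $c^{i\rightarrow j,\inn}_v$ has out-degree exactly $1$ in $G'$, namely along the outgoing cycle arc to the next block. Since $H$ uses exactly one incoming and one outgoing arc at each vertex, all arcs of the form $c^{i\rightarrow j,\inn}_{v_l} \to c^{i\rightarrow j,\out}_{v_{l+1}}$ are forced to lie in $H$.

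It then remains to decide, for each $v \in V^i$, whether the ``interior'' arc $c^{i\rightarrow j,\out}_v \to c^{i\rightarrow j,\inn}_v$ is used by $H$. If it is, then the out-degree of $c^{i\rightarrow j,\out}_v$ and the in-degree of $c^{i\rightarrow j,\inn}_v$ in $H$ are both already saturated, so neither $e^{i\rightarrow j,\out}_v$ nor $e^{i\rightarrow j,\inn}_v$ lies in $E(H)$. Otherwise the only remaining outgoing choice at $c^{i\rightarrow j,\out}_v$ is $e^{i\rightarrow j,\out}_v$ and the only remaining incoming choice at $c^{i\rightarrow j,\inn}_v$ is $e^{i\rightarrow j,\inn}_v$, so both of these arcs must lie in $H$. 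Setting $U \subseteq V^i$ to be the set of vertices for which the interior arc is omitted yields the claimed description of $E(H) \cap \partial(C^{i\rightarrow j})$.

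The argument is essentially forced by these extremal degree constraints, so there is no real obstacle. The only subtle point is confirming from the construction that the cycle vertices have no further neighbours in $G'$ beyond those listed above, which is a direct check against each of the construction steps of $G'$.
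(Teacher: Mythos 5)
Your proof is correct and uses essentially the same degree argument as the paper. You additionally observe that all inter-block arcs $c^{i\rightarrow j,\inn}_{v_l}\to c^{i\rightarrow j,\out}_{v_{l+1}}$ are forced into $H$ (since $c^{i\rightarrow j,\out}_{v_{l+1}}$ has in-degree one and $c^{i\rightarrow j,\inn}_{v_l}$ has out-degree one in $G'$), which the paper does not state explicitly but which makes the subsequent case analysis slightly cleaner; both arguments then reduce to the fact that $c^{i\rightarrow j,\out}_v$ and $c^{i\rightarrow j,\inn}_v$ each have exactly two remaining incident arcs, and exactly one of them must be used by $H$.
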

\begin{proof}
Note that $\partial(C^{i \rightarrow j}) = \{ e^{i\rightarrow j,\out}_{v_1}, e^{i\rightarrow j,\inn}_{v_1}, e^{i\rightarrow j,\out}_{v_2}, e^{i\rightarrow j,\inn}_{v_2}, \dotsc e^{i\rightarrow j,\out}_{v_{|V_i|}}$ $, e^{i\rightarrow j,\inn}_{v_{|V_i|}} \}$.
%Targeting a contradiction, 
First suppose that for some $v \in V^i$, we have %$e^{i\rightarrow j,\out}_{v} \notin E(H)$ and
$e^{i\rightarrow j,\inn}_{v} \in E(H)$. Since $H$ is a Hamiltonian cycle in $G'$, it has to visit the vertex $c^{i\rightarrow j,\out}_{v}$. Thus $H$ contains an outgoing edge of $c^{i\rightarrow j,\out}_{v}$. The only two outgoing edges of  $c^{i\rightarrow j,\out}_{v}$  are $e^{i\rightarrow j,\out}_{v}$ and $(c^{i\rightarrow j,\out}_{v}, c^{i\rightarrow j,\inn}_{v})$. Since $e^{i\rightarrow j,\inn}_{v} \in E(H)$ is an incoming edge of $c^{i\rightarrow j,\inn}_{v}$, the edge  $(c^{i\rightarrow j,\out}_{v}, c^{i\rightarrow j,\inn}_{v})$
cannot belong to $E(H)$.
%as it is also an incoming edge of $c^{i\rightarrow j,\inn}_{v}$. 
The only other candidate for the outgoing edge of $c^{i\rightarrow j,\out}_{v}$ is $e^{i\rightarrow j,\out}_{v}$, so $e^{i\rightarrow j,\out}_{v} \in E(H)$.

The proof of implication in the other direction is analogous.
Hence for each  $v \in V^i$ we have
$e^{i\rightarrow j,\out}_{v} \in E(H) \Leftrightarrow e^{i\rightarrow j,\inn}_{v} \in E(H)$.
\end{proof}

%Let $B^i = V(P^i) \cup \{u^i\}$. and let $B^i_{v} = V(P_{v}) \cup \{u^i\}$ for $v \in V^i$. \todom{Why do we need $u^i$ here? Todo: change $B^i$ to $P^i$}
As the next step, we show that when  $v \in V^i$ and a Hamiltonian cycle enters some cycle $C^{i\rightarrow j}$ adjacent to $P^i$ through a vertex from $P_{v}$, then it must enter all the $k-1$ cycles adjacent to $P^i$ through $P_{v}$.
Again, we cannot yet exclude a scenario that this would happen for multiple vertices  $v \in V^i$.

%The next lemma shows that for any Hamiltonian cycle of $G'$,  its edges that have one endpoint in $B^i$ and the other in the union of cycles $C^{i \rightarrow j}$, $j \in [k], j \neq i$ forms a union of edge sets $E(B^i_{v}, \bigcup_{j \neq i} C^{i\rightarrow j})$ for $v \in V^i$. \todoa{Seems $V^i$ only as $v$ is present there. MW: What do you mean?}

\begin{lemma}\label{lemma-H-cycle-edges-union}
For any Hamiltonian cycle $H$ in $G'$ and $i \in [k]$,
there exists a  subset $U \subseteq V^i$ for which
we have $E(H) \cap E(P^i, \bigcup_{j \neq i} C^{i\rightarrow j}) =
\bigcup_{v \in U} E(P_{v}, \bigcup_{j \neq i} C^{i\rightarrow j})$.
\end{lemma}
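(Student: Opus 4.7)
The plan is to strengthen \Cref{lemma-H-cycle-edges} from a per-$j$ statement into a joint statement over all $j \neq i$. For each $j \neq i$, that lemma supplies a subset $U^j \subseteq V^i$ such that $E(H) \cap E(P^i, C^{i\rightarrow j}) = \bigcup_{v \in U^j} \{e^{i\rightarrow j,\out}_v, e^{i\rightarrow j,\inn}_v\}$, and I aim to show that all these sets are equal; their common value then serves as the required $U$. The key local observation is a degree count at the internal vertices of $P_v$: for any $v \in V^i$ and $j \neq i$, the vertex $v^{j,\inn}$ has exactly two outgoing arcs in $G'$, namely the arc of $P_v$ to its successor and the bypass arc $e^{i\rightarrow j,\inn}_v$. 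Indeed, adjacency arcs and arcs to $\widehat T$ originate only at $\out$-vertices, and $u^i$ attaches to $P_v$ only through $v^\lef$ and $v^\ri$; no other outgoing arc reaches an internal $\inn$-vertex. Symmetrically, $v^{j,\out}$ has exactly two incoming arcs in $G'$: the arc of $P_v$ from its predecessor and the bypass arc $e^{i\rightarrow j,\out}_v$.

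Fix $v \in V^i$ together with two indices $j, j^+ \in [k] \setminus \{i\}$ that are consecutive in the ordering of $P_v$, so that $(v^{j,\inn}, v^{j^+,\out})$ is an edge of $P_v$. Since $H$ is a Hamiltonian cycle, exactly one outgoing arc at $v^{j,\inn}$ lies in $E(H)$, which yields $e^{i\rightarrow j,\inn}_v \in E(H) \Leftrightarrow (v^{j,\inn}, v^{j^+,\out}) \notin E(H)$. The analogous statement at $v^{j^+,\out}$, now for incoming arcs, gives $e^{i\rightarrow j^+,\out}_v \in E(H) \Leftrightarrow (v^{j,\inn}, v^{j^+,\out}) \notin E(H)$. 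Combining these with \Cref{lemma-H-cycle-edges}, which equates $v \in U^j$ with $e^{i\rightarrow j,\inn}_v \in E(H)$ and $v \in U^{j^+}$ with $e^{i\rightarrow j^+,\out}_v \in E(H)$, we obtain $v \in U^j \Leftrightarrow v \in U^{j^+}$.

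Iterating this equivalence along all consecutive pairs of indices in $P_v$ (including the pair $j = i-1$, $j^+ = i+1$ that skips over $i$) shows that $v \in U^j$ is independent of $j$; performing the argument for every $v \in V^i$ then gives $U^j = U^{j'}$ for all $j, j' \neq i$, and the lemma follows with $U$ set to this common value. The only step requiring real care is the degree bookkeeping in the first paragraph, since the construction involves many arc types (cycle bypasses, adjacency arcs, and arcs to $u^i$, $\widehat S$, $\widehat T$); once one has checked that none of these contribute an extra outgoing arc at an internal $\inn$-vertex or an extra incoming arc at an internal $\out$-vertex of $P_v$, the rest of the proof is a short, purely local chain of two-alternative case analyses.
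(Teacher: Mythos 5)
Your proof is correct and follows essentially the same route as the paper's: both arguments rest on the observation that an internal vertex $v^{j,\inn}$ of $P_v$ has exactly two outgoing arcs (the $P^i$-edge and $e^{i\rightarrow j,\inn}_v$) and $v^{j^+,\out}$ exactly two incoming arcs (the $P^i$-edge and $e^{i\rightarrow j^+,\out}_v$), so the Hamiltonian cycle's choice at $v^{j,\inn}$ determines its choice at $v^{j^+,\out}$ and the bypass pairs must be used uniformly across all $j \ne i$. The only difference is presentational: the paper argues by contradiction on a single consecutive pair where the status flips, whereas you phrase the same coupling as a chain of biconditionals along the internal path of $P_v$; your degree bookkeeping in the first paragraph (checking that adjacency arcs, $\widehat{S}/\widehat{T}$ arcs, and $u^i$-arcs never touch the internal $\inn$/$\out$-vertices on the relevant side) is the part the paper treats more tersely, and it checks out.
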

\begin{proof}
Targeting a contradiction, suppose there exists a vertex $v \in V^i$ such that at least one edge of $E(P_{v},$ $ \bigcup_{j \neq i} C^{i\rightarrow j})$ is in $E(H)$ and at least one edge of $E(P_{v},$ $ \bigcup_{j \neq i} C^{i\rightarrow j})$ is not in $E(H)$. 
From Lemma \ref{lemma-H-cycle-edges} we know that $E(H) \cap E(P_{v}, \bigcup_{j \neq i} C^{i\rightarrow j})$ is a union of pairs of the form $\{e^{i\rightarrow j,\out}_{v}, e^{i\rightarrow j,\inn}_{v}\}$ for some $j \in [k] \setminus \{i\}$. 
Hence there indices $p, j \in [k] \setminus \{i\}$ so that $(v^{p,\inn}, v^{j,\out}) \in E(P^i)$, one of the sets $E(B_{v}, C^{i\rightarrow p})$, $E(B_{v}, C^{i\rightarrow j})$ has empty intersection with $E(H)$ and the other one is contained in $E(H)$.
We can assume w.l.o.g. that $j = p + 1$ and the first of these sets is empty.

We arrive at the following scenario: $p \in [k] \setminus \{i\}$, $e^{i\rightarrow p,\out}_{v}, e^{i\rightarrow p,\inn}_{v} \notin E(H)$ and $e^{i\rightarrow p+1,\out}_{v}$ $, e^{i\rightarrow p+1,\inn}_{v} \in E(H)$. Since $H$ is a Hamiltonian cycle, it has to visit the vertex $v^{p,\inn}$. Thus, it has to traverse an outgoing edge of $v^{p,\inn}$. The only outgoing edges of $v^{p,\inn}$ are the edges $e^{i\rightarrow p,\inn}_{v}$ and $(v^{p,\inn}, v^{p+1,\out})$. By the assumption, $e^{i\rightarrow p,\inn}_{v} \notin E(H)$ so $(v^{p,\inn}, v^{p+1,\out}) \in E(H)$. 
On the other hand, since $e^{i\rightarrow p+1,\out}_{v} \in E(H)$ is an incoming edge of $v^{p+1,\out}$, we have $(v^{p,\inn}, v^{p+1,\out}) \notin E(H)$ as it is also an incoming edge to $v^{p+1,\out}$. 
This yields a contradiction.
%Thus $H$ does not contain $v^{p,\inn}$, contradicting that $H$ is a Hamiltonian cycle in $G'$.
\end{proof}

We want to argue now that in fact the set $U$ in \Cref{lemma-H-cycle-edges-union} is a singleton, that is, for fixed $i \in [k]$ each cycle $C^{i\rightarrow j}$ 
is being entered exactly once and from the same subpath $P_{v}$ of $P^i$.
To this end, we take advantage of the universal vertices $u^i$.
%The following lemma states that the set of edges of any Hamiltonian cycle that has one endpoint in cycles $C^{i\rightarrow j}, j \neq i$ and the other endpoint in $B^i$ is the edge the $E(B^i_{v}, \bigcup_{j \neq i} C^{i\rightarrow j})$ for some vertex $v \in V^i$.

\begin{lemma} \label{lemma-H-cycle-edges-one-vertex-only}
For any Hamiltonian cycle $H$ in $G'$ and $i\in [k]$ there exists $v \in V^i$ such that $E(H) \cap E(P^i, \bigcup_{j \neq i} C^{i\rightarrow j})$ is $E(P_{v}, \bigcup_{j \neq i} C^{i\rightarrow j})$. % for some $v \in V^i$.
\end{lemma}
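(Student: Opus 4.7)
The plan is to combine \Cref{lemma-H-cycle-edges-union} with a degree count at the universal vertex $u^i$. By \Cref{lemma-H-cycle-edges-union} there exists a set $U \subseteq V^i$ with $E(H) \cap E(P^i, \bigcup_{j \neq i} C^{i\rightarrow j}) = \bigcup_{v \in U} E(P_v, \bigcup_{j \neq i} C^{i\rightarrow j})$, and the task reduces to showing $|U| = 1$. The lower bound $|U|\ge 1$ is immediate: if $U$ were empty, then by \Cref{lemma-H-cycle-edges} $H$ would use no edges incident to any $C^{i\rightarrow j}$, and since the only remaining edges at those vertices are the internal cycle edges of $C^{i\rightarrow j}$, the Hamiltonian cycle would contain $C^{i\rightarrow j}$ as a disjoint sub-cycle, contradicting that $H$ is a single Hamiltonian cycle.

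The heart of the proof is the upper bound $|U| \le 1$, which I would obtain through the universal vertex $u^i$. Fix any $v \in U$ and let $j_{\min}$ and $j_{\max}$ be the smallest and largest block indices appearing inside $P_v$. By \Cref{lemma-H-cycle-edges-union}, the edge $e^{i\rightarrow j_{\min},\out}_v$ belongs to $E(H)$; since it is the only in-edge of $v^{j_{\min},\out}$ that $H$ uses, the $P_v$-path edge $(v^{\lef}, v^{j_{\min},\out})$ is forbidden from $E(H)$. A glance at the construction shows that $v^{\lef}$ has only two out-neighbors in $G'$, namely $v^{j_{\min},\out}$ and $u^i$, hence $(v^{\lef}, u^i) \in E(H)$. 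A symmetric argument applied to $e^{i\rightarrow j_{\max},\inn}_v$ gives $(u^i, v^{\ri}) \in E(H)$. If $|U| \ge 2$, then we would obtain at least two distinct edges of $H$ entering $u^i$ (and, symmetrically, two leaving), contradicting that $u^i$ has in- and out-degree $1$ in the simple Hamiltonian cycle $H$.

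The only step requiring a bit of care is the enumeration of out-edges of $v^{\lef}$ (and in-edges of $v^{\ri}$) in $G'$: it is a routine inspection of the construction which confirms that no edge from the terminal sets $S, T, \widehat S, \widehat T$, the adjacency-encoding edges, or the cycles $C^{i\rightarrow j}$ is incident to $v^{\lef}$, leaving only the $P_v$-path edge and the arc to $u^i$. With this degree bound in hand, the argument above delivers the claim.
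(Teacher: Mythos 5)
Your proof is correct and takes essentially the same approach as the paper's: both start from \Cref{lemma-H-cycle-edges-union}, handle $|U|=0$ by observing that $H$ would then be forced to coincide with $C^{i\rightarrow j}$ on its vertex set (impossible for a Hamiltonian cycle), and rule out $|U|\ge 2$ by noting that for each $v\in U$ the out-edge of $v^{\lef}$ in $H$ must be $(v^{\lef},u^i)$, giving $u^i$ in-degree at least $2$ in $H$. The only difference is that your symmetric argument at $v^{\ri}$ (giving $(u^i,v^{\ri})\in E(H)$ and a violation of out-degree at $u^i$) is redundant --- the in-degree count alone already yields the contradiction, which is all the paper uses.
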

\begin{proof}
%From Lemma \ref{lemma-H-cycle-edges-union} , we $E(H) \cap E(B^i, \bigcup_{j \neq i} C^{i\rightarrow j})$ is 
%$\bigcup_{v \in U} E(B^i_{v}, \bigcup_{j \neq i} C^{i\rightarrow j})$ for some subset $U \subseteq V^i$.
Let $U$ be the set from \Cref{lemma-H-cycle-edges-union}.
Targeting a contradiction, suppose that there exist two distinct $v,w \in U$. That is,  $E(H) \cap E(P^i, \bigcup_{j \neq i} C^{i\rightarrow j}) \supseteq E(P_{v}, \bigcup_{j \neq i} C^{i\rightarrow j}) \cup E(P_{w}, \bigcup_{j \neq i} C^{i\rightarrow j})$. 

%Consider the vertex $v^{\lef}$, being the first one on the path $P_{v}$. 
The Hamiltonian cycle $H$ must contain an outgoing edge of $v^{\lef}$, which is the first vertex on the path $P_{v}$.
The only out-neighbors of $v^{\lef}$ are $u^i$ and  $v^{p,\out}$ where $p=1$ when $i \neq 1$ or $p=2$ when $i=1$.
%(or $(v_\ell, v^{out}_2)$ if $i=1$) 
Recall that the $e^{i\rightarrow p,\out}_{v}$ is present in $E(B^i_{v}, \bigcup_{j \neq i} C^{i\rightarrow j})$ and thereby in $E(H)$ by the assumption. Also, since the edge $e^{i\rightarrow p,\out}_{v}$ is an incoming edge of $v^{p,\out}$ , the edge  $(v^{\lef}, v^{p, \out})$ cannot be in $E(H)$. % as it is also an incoming edge of $v^{p,\out}$. 
Thus the outgoing edge of $v^{\lef}$ in $H$ is $(v^{\lef}, u^i)$.

Now consider the vertex $w^{\lef}$, which is the first one of the path $P_{w}$. By the same argument as above, we have $(w^{\lef}, u^i) \in E(H)$.
This implies that $u^i$ has two in-neighbors in $H$, a contradiction.
%But this gives a contradiction that $H$ is a Hamiltonian cycle in $G'$ as both  $(v^{\lef}, u^i)$ and $(w^{\lef}, u^i)$  are  incoming edges of $u^i$.

It is also impossible that $U = \emptyset$ because then  $E(H) \cap \partial (C^{i\rightarrow j}) = \emptyset$ for each $j \ne i$ implying that $H$ is disconnected.
Consequently, $U$ contains exactly one element.
%
%Thus, either $U = \emptyset$ or $U = \{v\}$ for a vertex $v \in V^i$. Targeting a contradiction again, suppose $U = \emptyset$. Since the only outgoing edges from the vertices in $C^{i\rightarrow j}$, for any $j \in [k], j \neq i$, has its other endpoints in $B^i$, to traverse its vertices, $H$ has to contain some edges of $E(B^i, C^{i\rightarrow j}) = \bigcup_{v \in V^i} E(B^i_{v}, C^{i\rightarrow j})$. Thus, $U$ cannot be empty.
%So, $U = \{v\}$ proving the claim.
\end{proof}

%Recall that for $v \in V^i$ the set  $E(P_{v}, \bigcup_{j \neq i} C^{i\rightarrow j})$ is the union of $\{e^{i\rightarrow j, \out}_{v}, e^{i\rightarrow j, \inn}_{v}\}$ over $j \neq i$. 
We can summarize the arguments given so far as follows:
 for a Hamiltonian cycle $H$ in $G'$ and $i \in [k]$, there exists $v_i \in V^i$, such that, for all $j \neq i$, it holds that $E(H) \cap \partial(C^{i\rightarrow j}) = E(P_{v_i}, C^{i\rightarrow j}) = \{e^{i\rightarrow j, \out}_{v_i}, e^{i\rightarrow j, \inn}_{v_i}\}$.
%We now have the following lemma, which says that in this case edges $(v^{j,\out}, v^{j,\inn}) \notin E(H)$.
We make note of a simple implication of this fact.

\begin{lemma}\label{lemma:cycle-out-in}
Let $H$ be a Hamiltonian cycle in $G'$ and $i,j\in [k]$.
Let $v \in V^i$ satisfy  $E(H) \cap \partial(C^{i\rightarrow j}) = \{e^{i\rightarrow j, \out}_{v}, e^{i\rightarrow j, \inn}_{v}\}$.  
Then %for each $j \in [k] \setminus \{i\}$ 
$e=(v^{j,\out}, v^{j,\inn})$ does not belong to $E(H)$.    
\end{lemma}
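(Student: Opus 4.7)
The plan is to assume $(v^{j,\out}, v^{j,\inn}) \in E(H)$ and derive a contradiction by identifying a proper closed sub-walk of $H$, which is impossible because $H$ is a simple cycle.

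First I would analyse how $H$ interacts with the cycle $C^{i\rightarrow j}$. By the hypothesis, the only $H$-edge entering $C^{i\rightarrow j}$ from outside is $e^{i\rightarrow j,\inn}_v = (v^{j,\inn}, c^{i\rightarrow j,\inn}_v)$ and the only $H$-edge leaving it is $e^{i\rightarrow j,\out}_v = (c^{i\rightarrow j,\out}_v, v^{j,\out})$. Since $H$ is Hamiltonian it visits every vertex of $C^{i\rightarrow j}$, and since each vertex has in-degree and out-degree exactly one in $H$, the restriction of $H$ to $V(C^{i\rightarrow j})$ together with these two boundary arcs must form a single directed path $W$ that starts at $c^{i\rightarrow j,\inn}_v$, ends at $c^{i\rightarrow j,\out}_v$, and traverses every vertex of $C^{i\rightarrow j}$ using only arcs of $C^{i\rightarrow j}$.

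Next, assuming $(v^{j,\out}, v^{j,\inn}) \in E(H)$, the sequence
$$ c^{i\rightarrow j,\inn}_v \xrightarrow{W} c^{i\rightarrow j,\out}_v \to v^{j,\out} \to v^{j,\inn} \to c^{i\rightarrow j,\inn}_v $$
is a closed walk whose arcs all belong to $E(H)$. Because $H$ is a simple cycle, any closed walk formed by consecutively following arcs of $H$ must coincide with $H$ in its entirety. Consequently $V(H) \subseteq V(C^{i\rightarrow j}) \cup \{v^{j,\out}, v^{j,\inn}\}$, a set of size $2|V^i|+2$, which contradicts $H$ being Hamiltonian in $G'$: for instance, the terminals in $S$, or any vertex of $P^{i'}$ for $i' \ne i$, lie outside this set. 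Hence $(v^{j,\out}, v^{j,\inn}) \notin E(H)$.

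The only point requiring any care is the claim that $H$'s interaction with $C^{i\rightarrow j}$ is a single directed path from $c^{i\rightarrow j,\inn}_v$ to $c^{i\rightarrow j,\out}_v$; this is immediate from the degree-one in/out property at each vertex of $H$ together with the uniqueness of the entry and exit arcs provided by the hypothesis. I do not anticipate any further obstacle.
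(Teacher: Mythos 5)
Your proposal is correct and follows essentially the same argument as the paper: since $H$ enters $C^{i\rightarrow j}$ only via $e^{i\rightarrow j,\inn}_v$ and exits only via $e^{i\rightarrow j,\out}_v$, the portion of $H$ interacting with $C^{i\rightarrow j}$ is a single path $Q$ from $v^{j,\inn}$ through all of $C^{i\rightarrow j}$ to $v^{j,\out}$, and appending the edge $(v^{j,\out},v^{j,\inn})$ would close $Q$ into a short cycle, forcing $H$ to equal it and contradicting Hamiltonicity. The extra degree-counting justification you provide is just making explicit the "we infer" step in the paper's proof.
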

\begin{proof}
%Since every edge in $\partial(C^{i\rightarrow j})$ has its other endpoint in $B^i$,
Since $E(H) \cap \partial(C^{i\rightarrow j})$ comprises exactly two edges,
%\Cref{lemma-H-cycle-edges-one-vertex-only}
we infer that $H$ contains the path $Q= v^{j,\inn} \rightarrow c^{i \rightarrow j,\inn}_{v} \rightarrow \dotsc \rightarrow c^{i \rightarrow j,\out}_{v} \rightarrow v^{j,\out}$ with all internal vertices from $V(C^{i\rightarrow j})$.
If $H$ traversed the edge $e=(v^{j,\out}, v^{j,\inn})$, then it would contain the cycle $C$ formed by $Q$ and $e$. This would imply $H=C$, which contradicts that $H$ is Hamiltonian.
%cycle in $G'$ as it does not contain, say $s^1$. Thus, $e \notin E(H)$.
\end{proof}

%Let us define the $i$-th layer of $G'$ as $L^i = B^i \cup \bigcup_{j \neq i} V(C^{i\rightarrow j})$. 
We are going to show that 
$H$ can include only one edge that goes from $V(P^i)$ to $V(P^j)$;
it will follow that this edge must be $(v^{j,\out}_i, v^{i,\out}_j)$.
%To this end, we count the number of times $H$ enters and exits $L^i$.

\begin{lemma}\label{lemma-Pij}
Let $H$ be a Hamiltonian cycle in $G'$. %For all $i \in [k]$, we have $|E(H) \cap \partial^{\inn}(L^i)| = |E(H) \cap \partial^{\out}(L^i)| = k$.
For each pair $i,j \in [k]$ with $i < j$ we have $|E(H) \cap E(P^i, P^j)| \le 1$.
\end{lemma}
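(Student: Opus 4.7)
The plan is to show that every edge in $E(P^i, P^j)$ must be an adjacency-encoding edge of the form $(u^{j,\out}, v^{i,\inn})$ with $u \in V^i$, $v \in V^j$, and $uv \in E(G)$. A direct inspection of the construction confirms this: no other arc of $G'$ has one endpoint in $V(P^i)$ and the other in $V(P^j)$, because all remaining connections between the two paths pass through an auxiliary vertex (a universal $u^{i'}$, a terminal in $S \cup T \cup \widehat S \cup \widehat T$, or a vertex of some cycle $C^{\cdot\to\cdot}$). It thus suffices to show that $H$ uses at most one such adjacency edge between $P^i$ and $P^j$.

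Suppose toward a contradiction that $H$ contains two distinct adjacency edges $e_1 = (u_1^{j,\out}, v_1^{i,\inn})$ and $e_2 = (u_2^{j,\out}, v_2^{i,\inn})$. Since every vertex of a Hamiltonian cycle has out-degree $1$, $u_1 = u_2$ would force $v_1 = v_2$, so we may assume $u_1 \ne u_2$. For each $\ell \in \{1,2\}$, the arc $(u_\ell^{j,\out}, u_\ell^{j,\inn})$ continuing along the subpath $P_{u_\ell}$ cannot lie in $H$, since $u_\ell^{j,\out}$ already leaves via $e_\ell$. Hence $u_\ell^{j,\inn}$, being visited by $H$, must receive its $H$-incoming arc from some other in-neighbor.

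The key step is an exhaustive enumeration of the in-neighbors of $u_\ell^{j,\inn}$ in $G'$. Using $u_\ell \in V^i$ together with $i<j$, I would rule out: incoming adjacency edges (they end at $v^{i',\inn}$ for $v \in V^{j'}$ with $i'<j'$, so ending at $u_\ell^{j,\inn}$ would require $j<i$); the universal vertex $u^i$ (which is adjacent only to the endpoints $v^{\lef}, v^{\ri}$ of each $P_v$); the cycles $C^{i\to\cdot}$ (whose arcs at $v^{j',\inn}$ are outgoing, not incoming); and the remaining terminal sets $S \cup T \cup \widehat T$ (which do not point to internal vertices of any $P_v$). The only surviving in-neighbor is $\widehat s^{i\to j}$. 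Applying this to both $\ell=1$ and $\ell=2$ forces $\widehat s^{i\to j}$ to have out-degree at least $2$ in $H$, contradicting the Hamiltonicity of $H$.

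The main obstacle in this plan is precisely the enumeration above: one must carefully track every type of arc introduced in the construction and exploit the asymmetry $i<j$ built into both the adjacency encoding and the $\widehat S$-gadget. Once this inventory is complete, the contradiction is a one-line degree count at $\widehat s^{i\to j}$.
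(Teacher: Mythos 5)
Your proof is correct. The only difference from the paper's argument is a pleasing symmetry: you argue from the \emph{tail} side of the adjacency edges, showing that the two distinct starting vertices $u_1^{j,\out}, u_2^{j,\out}$ in $P^i$ each force their companion $u_\ell^{j,\inn}$ to be entered by $\widehat{s}^{\,i\to j}$, which would then need out-degree $2$ in $H$; the paper argues from the \emph{head} side, showing that the two distinct arrival vertices $u^{i,\inn}, w^{i,\inn}$ in $P^j$ each force their companion $u^{i,\out}, w^{i,\out}$ to exit to $\widehat{t}^{\,i\to j}$, which would then need in-degree $2$. In both cases the asymmetry $i<j$ is what collapses the neighbor set of the displaced vertex to a single terminal, and the construction was clearly designed so that either direction works. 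Your enumeration of the in-neighbors of $u_\ell^{j,\inn}$ is exactly right, and your preliminary observation that $E(P^i,P^j)$ consists solely of adjacency-encoding arcs is accurate (the paper leaves this implicit). One tiny stylistic point: you phrase the enumeration as ``I would rule out''; when writing this up you should carry out that enumeration fully as you have sketched it, since it is the load-bearing step.
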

\begin{proof}
Suppose that $|E(H) \cap E(P^i, P^j)| \ge 2$.
Then there exist $u, w \in V^j$ and $e_u \in \partial^\inn(u^{i,\inn}) \cap \partial^{\out}(P^i)$, $e_w \in \partial^\inn(w^{i,\inn}) \cap \partial^{\out}(P^i)$ such that $e_u, e_w \in E(H)$.
This implies that the edges $(u^{i,\out}, u^{i,\inn})$, $(w^{i,\out}, w^{i,\inn}) \in E(P^j)$ cannot be used by $H$.
Since $i < j$, the vertex $u^{i,\out}$ has only one out-neighbor different than $u^{i,\inn}$: the terminal $\widehat{t}^{i \rightarrow j}$.
Hence $(u^{i,\out}, \widehat{t}^{i \rightarrow j}) \in E(H)$.
But the same argument applies to $w^{i,\out}$.
As a consequence, two incoming edges of $\widehat{t}^{i \rightarrow j}$ are being used by $H$ and so we arrive at a contradiction.
%Therefore, $|E(H) \cap \partial^{\out}(L^i)| = k$.
\end{proof}

Finally, we prove the second implication in the correctness proof of~the~reduction.

\begin{lemma}\label{lemma:if-hampath-then-clique} 
If  $(G',X)$ is a Yes-instance of {\HamDFVS} then  $(G, (V^1, V^2, \dotsc ,V^k))$ is a Yes-instance of {\sc Multicolored Clique}.
\end{lemma}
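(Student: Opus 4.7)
The plan is to extract a multicolored clique $\{v_1,\dots,v_k\}$ from any Hamiltonian cycle $H$ in $G'$, where $v_i$ is the unique vertex of $V^i$ supplied by Lemma~\ref{lemma-H-cycle-edges-one-vertex-only}---the vertex whose subpath $P_{v_i}$ carries all entry and exit arcs of $H$ to every cycle $C^{i\rightarrow j}$. Combining this with Lemma~\ref{lemma:cycle-out-in}, for every $j\ne i$ the path arc $(v_i^{j,\out},v_i^{j,\inn})$ is absent from $E(H)$, so $H$ must leave $v_i^{j,\out}$ and reach $v_i^{j,\inn}$ (as well as $v_j^{i,\out}$ and $v_j^{i,\inn}$) through non-path arcs.

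The clique property will be proved pairwise: for every $i<j$, I will show that the specific adjacency arc $v_i^{j,\out}\to v_j^{i,\inn}$ belongs to $E(H)$, which by the construction of $G'$ is equivalent to $v_iv_j\in E(G)$. The mechanism is to exhibit two arcs of $E(P^i,P^j)\cap E(H)$ and collapse them via Lemma~\ref{lemma-Pij}. On one side, since $(v_i^{j,\out},v_i^{j,\inn})\notin E(H)$ and $H$ must leave $v_i^{j,\out}$, I enumerate its out-neighbors in $G'$: the path successor is forbidden, the $\widehat{t}$-terminal arc is unavailable because the corresponding rule needs $j<i$ while here $j>i$, and thus the only option left is an adjacency arc to some $w^{i,\inn}$ with $w\in V^j$ and $v_iw\in E(G)$. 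On the other side, since $(v_j^{i,\out},v_j^{i,\inn})\notin E(H)$ and $H$ must enter $v_j^{i,\inn}$, I enumerate its in-neighbors: the path predecessor is forbidden, the $\widehat{s}$-terminal arc is unavailable because that rule likewise needs $j<i$, and the only remaining option is an adjacency arc from some $u^{j,\out}$ with $u\in V^i$ and $uv_j\in E(G)$.

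Both exhibited arcs lie in $E(P^i,P^j)\cap E(H)$, so Lemma~\ref{lemma-Pij} forces them to coincide. Matching the source (superscript $j$) pins down $u=v_i$, and matching the target (superscript $i$) pins down $w=v_j$; hence $v_i^{j,\out}\to v_j^{i,\inn}\in E(H)$, which by the adjacency encoding gives $v_iv_j\in E(G)$. Running this argument over all $i<j$ yields the desired $k$-clique with one vertex in each color class. The main obstacle I expect is the bookkeeping of the case analysis: one has to go through every family of arcs introduced in the construction (path, cycle-incident, adjacency, $\widehat{s}/\widehat{t}$ terminals, $u^i$-incident edges, and $S/T$ terminals) and verify that none of them provides an in/out-arc to $v_i^{j,\out}$ or $v_j^{i,\inn}$ other than the ones listed, so that the asymmetric condition $i<j$ baked into the adjacency and terminal rules indeed leaves the adjacency arc as the sole surviving candidate.
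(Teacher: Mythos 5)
Your proof is correct and follows essentially the same route as the paper's: invoke Lemma~\ref{lemma-H-cycle-edges-one-vertex-only} to select $v_i\in V^i$, use Lemma~\ref{lemma:cycle-out-in} to exclude the path arcs $(v_i^{j,\out},v_i^{j,\inn})$ and $(v_j^{i,\out},v_j^{i,\inn})$, observe that the only surviving out-neighbor of $v_i^{j,\out}$ and in-neighbor of $v_j^{i,\inn}$ lie in $E(P^i,P^j)$, and finally apply Lemma~\ref{lemma-Pij} to collapse the two forced arcs into the single adjacency arc $(v_i^{j,\out},v_j^{i,\inn})$. The only difference is that you spell out the neighbor enumeration (including why the $\widehat{s}$/$\widehat{t}$-terminal arcs are absent when $i<j$) where the paper states it tersely as ``the remaining outgoing edges belong to $E(P^i,P^j)$''; that added bookkeeping is accurate and matches the construction.
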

\begin{proof}
    Let $H$ be a Hamiltonian cycle in $G'$ and
    $v_i \in V^i$ be the vertex given by \Cref{lemma-H-cycle-edges-one-vertex-only} for  $i \in [k]$.
Fix a pair of indices  $i < j$.
By \Cref{lemma:cycle-out-in}  we know that the edge $(v_i^{j,\out}, v_i^{j,\inn})$ does not belong to $E(H)$. 
The remaining outgoing edges of $v_i^{j,\out}$ belong to $E(P^i,P^j)$ and $H$ must utilize one of them.
By the same argument, 
$H$ must traverse one of the incoming edges of $v_j^{i,\inn}$ that belongs to $E(P^i,P^j)$.
Due to \Cref{lemma-Pij}, the Hamiltonian cycle $H$ can use at most one edge from $E(P^i,P^j)$.
Consequently, we obtain that $(v_i^{j,\out}, v_j^{i,\inn}) \in E(H)$.
In particular, this means that $(v_i^{j,\out}, v_j^{i,\inn})$ is present in $E(G')$ and, by the construction of $G'$, implies that $v_iv_j \in E(G)$.
Therefore, $\{v_1, v_2, \dots, v_k\}$ forms a clique in $G$.
\end{proof}

Lemmas \ref{lemma:if-clique-then-hampath} and \ref{lemma:if-hampath-then-clique} constitute that the instances   $(G, (V^1, V^2, \dotsc ,V^k))$ and $(G',X)$ are equivalent while
\Cref{lemma:DFVS-k^2}
ensures that $|X| = \OO(k^2)$.
We have thus obtained a parameterized reduction from {\sc Multicolored Clique} to \HamDFVS, proving \Cref{thm:HamCycleDFVS}.

\section{Hardness of {\LongCycleG}}
\label{sec:LongPathG}

In this section, we prove \Cref{thm:LongCycleGirth} by giving a parameterized reduction from {\HamDFVS} to {\LongCycleG}. 

\medskip \noindent {\bf Construction.} Consider an instance $(G,X)$ of {\HamDFVS}.
Let $k=|X|$ and $n=|V(G)|$. 
We assume that $k \ge 2$ as otherwise we can solve $(G,X)$ in polynomial time.
Let $x$ be an arbitrary vertex from $X$.
We construct an instance $(G',k+1)$ of {\LongCycleG} as follows. 

Start with $G'=G$. For every $v \in X$, do the following:
\begin{enumerate}
    \item replace $v$ with two vertices $v_{in}$ and $v_{out}$,
    \item add edges $(u, v_{in})$ for every $(u,v) \in E(G)$ and $(v_{out},u)$ for every $(v,u) \in E(G)$,
    \item if $v = x$, then add a directed path of length $n+k-1$ from $v_{in}$ to $v_{out}$ with newly introduced $n+k-2$ internal vertices,
    \item if $v \ne x$, then add a directed path of length $n-1$ from $v_{in}$ to $v_{out}$ with newly introduced $n-2$ internal vertices, and
    \item add the edge $(v_{out}, v_{in})$. 
\end{enumerate} 
%We also insert two universal vertices $u_{in}$ and $u_{out}$ with edges from every vertex to $u_{in}$ and edges from $u_{out}$ to every vertex. We also introduce vertices $u_1, u_2, \dotsc, u_{k-2}$, add edges $(u_i, u_{i+1}), i \in [k-3]$, and the edge $(u_{k-2}, u_{out})$. 
The transformation for $v \ne x$ is depicted in \Cref{fig:reduction-long-path-2}.
The special treatment of vertex $x$ guarantees that the size of $V(G')$ is a multiplicity of $n$.
We introduce $n+k$ vertices in place of $x \in X$ and $n$ vertices in place of every other $v \in X$.
These numbers sum up to $(n+k) + n(k-1) = nk + k$.
Adding the number of untouched vertices $v \in V(G) \setminus X$ gives $nk + k + (n-k) = n(k+1)$.

\begin{figure}
\begin{center}
\includegraphics[scale=0.35]{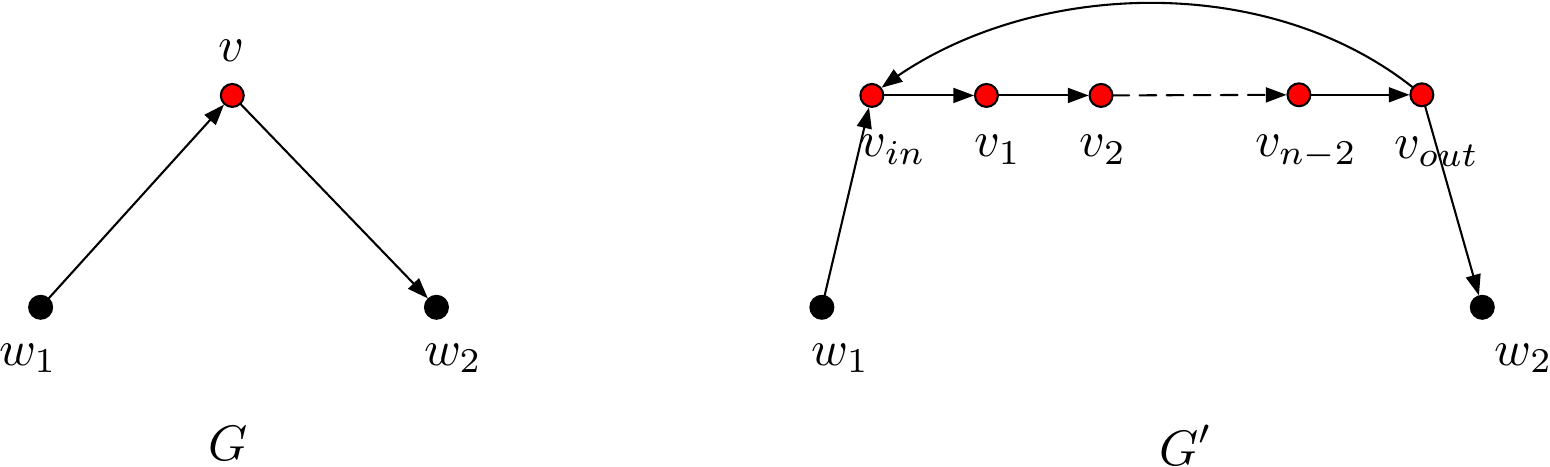}  
\end{center}
\caption{Replacing a vertex  $v \in X$ with a cycle of length $n$ in the construction of $G'$.
The vertices $w_1,w_2$ are exemplary in-neighbor and out-neighbor, respectively, of $v$.
}\label{fig:reduction-long-path-2} 
\end{figure}

\begin{observation}
    \label{lem:girth-size}
    The graph $G'$ has $n(k+1)$ vertices.
\end{observation}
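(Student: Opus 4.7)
The plan is to verify the claim by a straightforward vertex count, partitioning $V(G')$ according to the cases of the construction. First, I would enumerate the contribution from the distinguished vertex $x \in X$: the construction replaces $x$ by $v_{in}$, $v_{out}$, and a directed path of length $n+k-1$ between them, which introduces $n+k-2$ fresh internal vertices; together this gives $n+k$ vertices.

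Next, I would handle each remaining $v \in X \setminus \{x\}$, of which there are $k-1$. For each such vertex, we obtain $v_{in}$, $v_{out}$, and $n-2$ new internal vertices from the directed path of length $n-1$, contributing $n$ vertices per element of $X \setminus \{x\}$, i.e.\ $n(k-1)$ in total. Finally, the vertices in $V(G) \setminus X$ are untouched by the construction, so they contribute exactly $n - k$ vertices.

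Summing the three contributions yields
\[
(n+k) + n(k-1) + (n-k) \;=\; n+k+nk-n+n-k \;=\; n(k+1),
\]
which establishes the claim. There is no real obstacle here; the only thing to be careful about is not to double-count the endpoints $v_{in}, v_{out}$ of the replacement gadget (they are counted together with the internal path vertices, not separately with the original $V(G)\setminus X$ block), and to observe that the edge $(v_{out}, v_{in})$ added in step~5 contributes no new vertices.
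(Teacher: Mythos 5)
Your count is correct and matches the paper's own accounting exactly: $n+k$ vertices for the gadget replacing $x$, $n$ vertices for each of the other $k-1$ gadgets, and $n-k$ untouched vertices, summing to $n(k+1)$.
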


\noindent Since every cycle in $G$ must visit at least one vertex in $X$, its counterpart in $G'$  becomes~long.

\begin{lemma}\label{lem:girth-girth} 
The girth of $G'$ is $n$.
\end{lemma}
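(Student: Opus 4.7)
The plan is to establish that the girth of $G'$ equals $n$ by matching upper and lower bounds. Throughout, recall that $k = |X| \ge 2$ by assumption, so $X \setminus \{x\}$ is nonempty.

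\textbf{Upper bound.} I would exhibit an explicit cycle of length $n$. Pick any $v \in X \setminus \{x\}$. The gadget built around $v$ in $G'$ consists of the directed $(v_{in}, v_{out})$-path of length $n-1$ together with the loopback edge $(v_{out}, v_{in})$; concatenating them gives a cycle of length exactly $n$, which certifies that the girth is at most $n$.

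\textbf{Lower bound.} Let $C$ be any cycle in $G'$. The key observation is a \emph{rigidity} property of the gadgets: each internal vertex of the gadget of any $v \in X$ has exactly one incoming and one outgoing edge in $G'$, both on the internal path. Moreover, $v_{in}$ has a unique outgoing edge (to the first internal vertex) and $v_{out}$ has a unique incoming edge (from the last internal vertex). Hence, if $C$ visits any vertex of the gadget of $v$, then it is forced to traverse the entire internal path from $v_{in}$ to $v_{out}$, which contributes at least $n-1$ edges (or $n+k-1$ if $v=x$). Furthermore, $C$ \emph{must} visit a gadget vertex: the induced subgraph of $G'$ on $V(G) \setminus X$ is isomorphic to $G - X$, which is acyclic by the definition of DFVS.

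To conclude, I would case-split on whether $C$ uses the loopback edge $(v_{out}, v_{in})$ for the gadget it visits. If it does, then together with the forced traversal of the internal path this already closes into a cycle of length exactly $n$ (or $n+k$), so $C$ is this cycle and has length at least $n$. If it does not, then $C$ must enter $v_{in}$ via an external edge and leave $v_{out}$ via an external edge, contributing two additional edges, so $|E(C)| \ge (n-1) + 2 = n+1$. In either case $|E(C)| \ge n$. I do not expect any real obstacle; the whole argument reduces to the rigidity observation, which is immediate from inspecting the construction.
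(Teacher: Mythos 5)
Your proof is correct and follows essentially the same strategy as the paper: exhibit the gadget cycle for some $v \in X \setminus \{x\}$ as a length-$n$ cycle for the upper bound, and for the lower bound use acyclicity of $G - X$ together with the forced traversal of an entire internal gadget path. Your rigidity observation and the case split on the loopback edge make the lower-bound argument somewhat more explicit than the paper's terse version, but the underlying idea is the same; no new ingredient is introduced and nothing is missing.
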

\begin{proof}
Consider $v\in X$ different than $x$ (it exists because $|X| \ge 2)$.
The length of the cycle formed by the unique path from $v_{in}$ to $v_{out}$ and edge $(v_{out}, v_{in})$ is $n$. We argue that all the other cycles have length at least $n$. 
Let $X'$ denote the union of all newly introduced vertices in $G'$.
Clearly, any cycle in $G'$ that is contained in $G'[X']$ has length at least $n$.
The graph $G'-X'$ is isomorphic to $G-X$ and hence acyclic.
%Recall that the graph $G-S$ is acyclic and it remains unchanged in $G'$. Thus, the graph $G'-S'$ is acyclic where $S'$ is the union of all the newly introduced vertices in $G'$. 
\begin{comment}
Moreover, the universal vertices $u_1, u_2, \dotsc, u_{k-2}, u_{out}$ and $u_{in}$ cannot be part of any cycle in $G'$: 
\begin{itemize}
    \item Since $u_{1}$ is a source, it cannot be part of any cycle. 
    \item Then, since $u_1, u_2, \dotsc, u_{k-2}, u_{out}$ induce a path where the only incoming neighbor of $u_i, 2 \leq i \leq k-2$, is $u_{i-1}$ ($u_{k-2}$ for $u_{out}$), none of these vertices can be part of a cycle as well. 
    \item Since $u_{in}$ is a sink, it cannot be part of any cycle.
\end{itemize}
\end{comment}
The only vertices in $X'$ %other than $u_{out}$ 
%that have an out-neighbor in $V(G') \setminus S'$ are $v_{out}$ for $v \in S$, and the only vertices in $S'$  %other than $u_{in}$ 
that have an in-neighbor in $V(G') \setminus X'$ are $v_{in}$ for $v \in X$.
Thus, any cycle $C$ in $G'$ that visits a vertex in $V(G') \setminus X'$ must visit 
a vertex of the form $v_{in}$ for $v \in X$ and then follow the path from $v_{in}$ to $v_{out}$.
Hence $|V(C)| \ge n$.
This concludes the proof.
%two vertices $v^1_{in}$ and $v^2_{out}$ for $v^1, v^2 \in S$ such that subpath from $v^1_{in}$ to $v^2_{out}$ is contained in $S'$. This subpath must contain a path from $v^2_{in}$ to $v^2_{out}$, which is of size at least $n$. Thus, since there is at least one vertex in $V(G') \setminus S'$, there exists a cycle in $G'$, and every cycle in $G'$ has size at least $n$.
\end{proof}

\begin{lemma}\label{lemma:hampath-equiv} 
$G$ has a Hamiltonian cycle if and only if $G'$ has a Hamiltonian cycle.
\end{lemma}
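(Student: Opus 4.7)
The plan is to establish the two directions via a direct correspondence: each vertex $v \in X$ in $G$ is expanded in $G'$ into the directed path from $v_{in}$ to $v_{out}$ (together with the back-edge $(v_{out}, v_{in})$), while vertices outside $X$ are unchanged. I expect the forward direction to be straightforward and the backward direction to rest on a local structural analysis of any Hamiltonian cycle $H'$ of $G'$.

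For the ``only if'' direction, I would start from a Hamiltonian cycle $H$ of $G$ and build $H'$ in $G'$ by processing $H$ edge by edge: an edge $(u,v)$ of $H$ is translated into $(u, v_{in})$, $(v_{out}, u)$, $(v_{out}, w_{in})$, or $(u,v)$, depending on which endpoints belong to $X$; all such edges exist by construction. In addition, whenever $H$ visits some $v \in X$, I insert in $H'$ the unique directed path from $v_{in}$ to $v_{out}$ between the corresponding entering and leaving edges. Since $H$ visits each vertex of $G$ exactly once and the inserted paths cover precisely the new vertices of $G'$, the result is a Hamiltonian cycle of $G'$ that never uses any edge $(v_{out}, v_{in})$.

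For the ``if'' direction, suppose $H'$ is a Hamiltonian cycle in $G'$ and fix any $v \in X$. Each of the $n-2$ (or $n+k-2$, when $v=x$) internal vertices of the $v_{in}$-to-$v_{out}$ path has in-degree and out-degree $1$ in $G'$, so $H'$ must traverse this whole path; moreover $v_{in}$'s only out-edge points to the first internal vertex and $v_{out}$'s only in-edge comes from the last one, so $H'$ must enter at $v_{in}$ and leave at $v_{out}$. The main obstacle, and the step that requires a small argument, is excluding the edge $(v_{out}, v_{in})$ from $H'$: if $H'$ used it, then combined with the forced traversal of the path it would already close a cycle on $n$ or $n+k$ vertices, and since this is strictly smaller than $|V(G')| = n(k+1)$ (as $k \ge 2$), $H'$ could not be Hamiltonian. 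Hence $H'$ enters $v_{in}$ along some edge $(u, v_{in})$ coming from an original in-neighbor $u$ of $v$ in $G$, and leaves $v_{out}$ along some edge $(v_{out}, u')$ to an original out-neighbor $u'$ of $v$.

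Finally, I would contract each $v_{in}$-to-$v_{out}$ path (for $v\in X$) back to the single vertex $v$. By the previous paragraph, this transforms the fragment of $H'$ around each $v \in X$ into a single visit to $v$ via an edge $(u,v) \in E(G)$ followed by an edge $(v,u') \in E(G)$, while the parts of $H'$ lying in $V(G) \setminus X$ stay intact. The resulting closed walk visits every vertex of $G$ exactly once and uses only edges of $G$, so it is a Hamiltonian cycle of $G$, completing the proof.
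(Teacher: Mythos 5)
Your proof is correct and follows essentially the same approach as the paper: expand each $v\in X$ into its $v_{in}$-to-$v_{out}$ path in the forward direction, and in the reverse direction observe that the degree constraints force $H'$ to traverse each such path intact so that it can be contracted back. You are somewhat more explicit than the paper in ruling out the back-edge $(v_{out},v_{in})$ by a vertex-count argument, but this is a minor elaboration of the same idea.
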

\begin{proof}
    When $C$ is a Hamiltonian cycle in $G$, then it easy to transform it into a Hamiltonian cycle $C'$ in $G'$
    we replacing the occurrence of each vertex $v \in X$ on $C$ with the the unique path from $v^i_{in}$ to $v^i_{out}$.
    %On the other hand, 
    In the reverse direction, suppose that $C'$ is a Hamiltonian cycle in $G'$.
    For each  $v \in X$ the vertex $v^i_{in}$ has exactly one outgoing edge, being the first edge on the path $P^i$ from $v^i_{in}$ to $v^i_{out}$.
    Hence $P^i$ is a subpath of $C'$.
    By contracting each such subpath $P_i$ into a single vertex, we transform $C'$ into a cycle in $G$ that visits all the vertices.
    This gives the second implication.
\end{proof}

It follows that $G$ has a Hamiltonian cycle if and only if $G'$ has a cycle of length $|V(G')| = n(k+1)$.
Since $n$ is the girth of $G'$, we get that $(G',k+1)$ is a Yes-instance of \LongCycleG{} if and only if $(G,X)$ is a Yes-instance of \HamDFVS.
This yields \Cref{thm:LongCycleGirth}.

\iffalse
The proof of Theorem \ref{theorem:redn-hamdfvs-longpg} now follows from Lemma \ref{lemma:hampath-iff-longpg} as we have shown that 
%for the instance $(G',k')$ of {\LongPathG} with $k' = 2gk +2$, 
$(G,S)$ is a Yes-instance of {\HamDFVS} if and only if $(G',k')$ is a Yes-instance of {\LongCycleG} with $k' = k +1$ where $g=n$ is the girth of $G'$.
\fi

\section{Hardness of the Path Variants}

We justify that the W[1]-hardness results can be transferred to the path variants of both problems.

Consider a digraph $G$ with a vertex $v \in V(G)$.
We obtain the digraph $G^v$ from $G$ by removing $v$ and introducing two new vertices $v_{in}, v_{out}$.
Next, for every $(v,u) \in E(G)$ we add the edge $(v_{out},u)$ to $E(G^v)$, and for every $(u,v) \in E(G)$ we add the edge $(u, v_{in})$ to $E(G^v)$.

\begin{lemma}\label{lem:hamcycle-hampath}
    $G$ has a Hamiltonian cycle if and only if $G^v$ has a Hamiltonian path.
\end{lemma}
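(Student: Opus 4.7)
The plan is to establish the two implications separately, in each case by exhibiting the explicit correspondence between a Hamiltonian cycle of $G$ and a Hamiltonian path of $G^v$ that starts at $v_{out}$ and ends at $v_{in}$. The key structural observation, which I would state first, is that in $G^v$ the vertex $v_{out}$ has no incoming edges and the vertex $v_{in}$ has no outgoing edges. Indeed, every edge of $G$ of the form $(u,v)$ is redirected to $(u,v_{in})$, so no edge ends at $v_{out}$; symmetrically, every edge of the form $(v,u)$ becomes $(v_{out},u)$, so no edge starts at $v_{in}$. Consequently, any Hamiltonian path in $G^v$ must begin at $v_{out}$ and end at $v_{in}$.

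For the forward direction, assume $G$ has a Hamiltonian cycle $C$. Since $v \in V(G)$, we may write $C$ as $v \to u_1 \to u_2 \to \cdots \to u_{n-1} \to v$, where $\{u_1, \dots, u_{n-1}\} = V(G) \setminus \{v\}$. By the construction of $G^v$, the edge $(v, u_1)$ corresponds to $(v_{out}, u_1) \in E(G^v)$, and the edge $(u_{n-1}, v)$ corresponds to $(u_{n-1}, v_{in}) \in E(G^v)$; all other edges of $C$ are unchanged. Therefore
\[
v_{out} \to u_1 \to u_2 \to \cdots \to u_{n-1} \to v_{in}
\]
is a path in $G^v$ that visits every vertex of $V(G^v) = (V(G) \setminus \{v\}) \cup \{v_{in}, v_{out}\}$ exactly once, i.e., a Hamiltonian path.

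For the reverse direction, assume $G^v$ has a Hamiltonian path $P$. By the structural observation above, $P$ has the form $v_{out} \to u_1 \to u_2 \to \cdots \to u_{n-1} \to v_{in}$, where $\{u_1, \dots, u_{n-1}\} = V(G) \setminus \{v\}$. The first edge $(v_{out}, u_1)$ arises in $G^v$ only from an edge $(v, u_1) \in E(G)$, the last edge $(u_{n-1}, v_{in})$ arises only from $(u_{n-1}, v) \in E(G)$, and every internal edge $(u_i, u_{i+1})$ belongs to $E(G)$ unchanged. Hence $v \to u_1 \to \cdots \to u_{n-1} \to v$ is a Hamiltonian cycle in $G$. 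The main (and only) subtlety is the above degree argument pinning down the endpoints of $P$; once that is in place the two directions are essentially bookkeeping.
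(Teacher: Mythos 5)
Your proof is correct and follows essentially the same approach as the paper's: the key step in both is the observation that $v_{out}$ has no incoming edges and $v_{in}$ has no outgoing edges, forcing any Hamiltonian path in $G^v$ to start at $v_{out}$ and end at $v_{in}$. You merely spell out the forward direction, which the paper dismisses as trivial.
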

\begin{proof}
    The first implication is trivial.
    To see the second one, consider a Hamiltonian path $P'$ in $G^v$.
    Since $v_{out}$ has no incoming edges, it must be the first vertex on $P'$.
    Similarly, $v_{out}$ has no outgoing edges and it is the last vertex on $P'$.
    Let $P$ be a path in $G$ obtained from $P'$ by removing its first and last vertex.
    Then the first vertex on $P$ is an out-neighbor of $v$ in $G$ and the last vertex on $P$ is an in-neighbor of $v$.
    Therefore, $P$ can be turned into a Hamiltonian cycle in $G$ by appending $v$.
\end{proof}

By combining \Cref{thm:HamCycleDFVS} with \Cref{lem:hamcycle-hampath} with immediately obtain the following.

\begin{corollary}
{\sc Hamiltonian Path By DFVS} is $W[1]$-hard.
\end{corollary}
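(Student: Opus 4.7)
The plan is to show that the vertex-splitting construction $v \mapsto G^v$ from \Cref{lem:hamcycle-hampath} already serves as a parameterized reduction from {\HamDFVS} to {\sc Hamiltonian Path By DFVS}, and then to invoke \Cref{thm:HamCycleDFVS} as a black box.

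First I would fix an arbitrary vertex $v \in V(G)$ and output the instance $(G^v, X')$, where I set $X' = X \setminus \{v\}$ when $v \in X$ and $X' = X$ otherwise. The one routine step that needs checking is that $X'$ is indeed a DFVS of $G^v$ of size at most $|X|$. The key observation is that in $G^v$ the vertex $v_{in}$ has no outgoing edges and $v_{out}$ has no incoming edges, so neither can lie on any directed cycle. Therefore any cycle of $G^v - X'$ must avoid $\{v_{in}, v_{out}\}$ entirely, so it lifts to a cycle in the subgraph of $G - X$ induced on $V(G) \setminus (X \cup \{v\})$, contradicting that $G - X$ is acyclic.

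Next I would simply apply \Cref{lem:hamcycle-hampath} to obtain the equivalence ``$G$ has a Hamiltonian cycle if and only if $G^v$ has a Hamiltonian path''. Since the map $(G,X) \mapsto (G^v, X')$ is computable in polynomial time and the parameter only shrinks, this is a valid parameterized reduction. Combining it with \Cref{thm:HamCycleDFVS} transfers W[1]-hardness to {\sc Hamiltonian Path By DFVS}.

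I expect no real obstacle here: the whole argument is a one-line application of the two results already established, plus the short bookkeeping verification above. The only place one has to be mildly careful is the sink/source argument ensuring that $v_{in}$ and $v_{out}$ contribute no new cycles, and this is immediate from the definition of $G^v$.
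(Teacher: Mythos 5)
Your proposal matches the paper's proof exactly: the paper also obtains the corollary by combining \Cref{thm:HamCycleDFVS} with the vertex-splitting equivalence of \Cref{lem:hamcycle-hampath}, leaving the DFVS-size bookkeeping implicit. Your explicit verification that $X' = X \setminus \{v\}$ (or $X$) remains a DFVS of $G^v$—using that $v_{in}$ is a sink and $v_{out}$ is a source in $G^v$—is correct and fills in the one detail the paper omits.
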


We can also apply this modification to the digraph $G'$ constructed in \Cref{sec:LongPathG} for the purpose of proving W[1]-hardness of \LongCycleG{}.
Recall that we start with a digraph $G$ with $n$ vertices and a DFVS $X$ of size $k$.
Then $G'$ has $n(k+1)$ vertices and girth $n$ (\Cref{lem:girth-girth}).
Also, $G'$ is Hamiltonian if and only if $G$ is Hamiltonian (\Cref{lemma:hampath-equiv}).
We choose any vertex $v \in V(G')$ outside the DFVS of $G'$ and construct digraph $(G')^v$ via \Cref{lem:hamcycle-hampath} that has girth $n$ and  $n(k+1) + 1$ vertices.
Next, we append a directed path on $n-1$ vertices to $v_{in}$
so that now the last vertex of this path has no outgoing edges and the number of vertices increases to $n(k+2)$.
Then $((G')^v,k+2)$ is an instance of \LongPathG{}  equivalent to the instance $(G,X)$ of \HamDFVS.

\begin{corollary}
\LongPathG{} is $W[1]$-hard.
\end{corollary}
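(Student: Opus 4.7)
The plan is to chain the \LongCycleG{} construction from Section~\ref{sec:LongPathG} with the vertex-splitting operation of \Cref{lem:hamcycle-hampath} and a final acyclic padding step. Start with an instance $(G,X)$ of \HamDFVS{} with $|V(G)|=n$ and $|X|=k\ge 2$, and form the digraph $G'$ of Section~\ref{sec:LongPathG}: by \Cref{lem:girth-size}, \Cref{lem:girth-girth}, and \Cref{lemma:hampath-equiv}, the graph $G'$ has $n(k+1)$ vertices, girth exactly $n$, and is Hamiltonian if and only if $G$ is.

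Next, I would pick any vertex $v$ of $G'$ that lies outside the length-$n$ cycles created by the Section~\ref{sec:LongPathG} construction (for instance, a vertex of $V(G)\setminus X$, which we may assume to be nonempty since otherwise $k=n$ and the problem is trivially polynomial) and form $(G')^v$ as in \Cref{lem:hamcycle-hampath}. This turns the Hamiltonian-cycle question on $G'$ into a Hamiltonian-path question on $(G')^v$ while keeping the girth equal to $n$, because splitting such a $v$ leaves the short cycles of $G'$ intact and the new vertices $v_{out}, v_{in}$ become a source and a sink, respectively, and hence lie on no cycle. The digraph $(G')^v$ has $n(k+1)+1$ vertices. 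Then I would append a fresh directed path of $n$ new vertices whose first vertex is an out-neighbor of $v_{in}$ and whose last vertex is a sink; this tail is acyclic and has only one in-edge from the rest of the graph, so the girth stays $n$, and the resulting digraph $H$ has $n(k+2)+1$ vertices.

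Finally, I would verify that $(G,X)$ is a Yes-instance of \HamDFVS{} if and only if $(H,k+2)$ is a Yes-instance of \LongPathG. In the forward direction, a Hamiltonian cycle of $G$ lifts via \Cref{lemma:hampath-equiv} and \Cref{lem:hamcycle-hampath} to a Hamiltonian path of $(G')^v$ from $v_{out}$ to $v_{in}$, which extended along the appended tail yields a Hamiltonian path of $H$ of length exactly $n(k+2)=g\cdot(k+2)$. Conversely, any path of length at least $n(k+2)$ spans $n(k+2)+1$ vertices and is therefore Hamiltonian in $H$; it must start at the unique source $v_{out}$ and end at the unique sink at the tip of the appended tail, so stripping the tail recovers a Hamiltonian path of $(G')^v$ and hence a Hamiltonian cycle of $G$. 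This gives a polynomial-time parameterized reduction from the W[1]-hard problem \HamDFVS{} (\Cref{thm:HamCycleDFVS}), so \LongPathG{} is W[1]-hard. The only point needing care is girth-preservation after each of the three transformations, and this is immediate once $v$ is chosen outside the length-$n$ cycles of $G'$.
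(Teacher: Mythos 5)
Your proposal follows essentially the same route as the paper: split a suitable vertex $v$ of $G'$ into a source $v_{out}$ and a sink $v_{in}$ via \Cref{lem:hamcycle-hampath}, append a directed tail at $v_{in}$, and hand the result to \LongPathG{} with parameter $k+2$.

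You are, in fact, a bit more careful than the paper on two points. First, you explicitly require $v$ to avoid the length-$n$ cycles of $G'$ (e.g.\ $v\in V(G)\setminus X$); the paper only says ``outside the DFVS of $G'$,'' which does not by itself guarantee that a girth-$n$ cycle survives the split (a minimum DFVS hits each short cycle in only one vertex, so a non-DFVS vertex can still sit on the unique remaining length-$n$ cycle when $k$ is small). Second, you append $n$ fresh vertices, so that $H$ has exactly $n(k+2)+1$ vertices and a path of length $g\cdot(k+2)=n(k+2)$ is precisely a Hamiltonian path, and every Hamiltonian path attains this bound. The paper instead appends $n-1$ vertices and records the total as $n(k+2)$, in which case a path of length $n(k+2)$ cannot exist at all; your count is the one under which the stated equivalence with $(G,X)$ actually goes through. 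The only tiny imprecision in your write-up is the phrase ``the unique source $v_{out}$'': uniqueness is not needed (and not obvious a priori) — it suffices that $v_{out}$ has no in-edges, which already forces any Hamiltonian path to start there, and likewise the tail's tip having no out-edges forces it to be the endpoint. With that caveat, the argument is correct and complete.
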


\section{XP Algorithm for {\LongPathG}}\label{section:fpt-long-path-above-girth}

Johnson et al.~\cite{DBLP:journals/jct/JohnsonRST01} proved that {\sc Hamiltonian Path} is in {XP} parameterized by directed treewidth.
This result was later extended by de Oliveira Oliveira~\cite{Oliveira16} to capture a wider range of problems expressible in Monadic Second Order (MSO) logic with counting constraints.
As a special case of this theorem, we have the following.

\begin{theorem}[\cite{Oliveira16}]\label{theorem:longest-path-directed-treewidth}
{\sc Longest Path} on directed graphs is in {\em XP} when parameterized by directed treewidth.
\end{theorem}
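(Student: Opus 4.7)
The plan is to prove the theorem by bottom-up dynamic programming on a directed tree decomposition of $G$, extending Johnson et al.'s~\cite{DBLP:journals/jct/JohnsonRST01} XP algorithm for \textsc{Hamiltonian Path} so that it tracks path length as an additional coordinate. First, I would compute a directed tree decomposition $(T,\beta,\gamma)$ of width $w$ equal to the directed treewidth of $G$; the algorithm underlying Theorem~\ref{theorem:directed-grid-theroem} already delivers such a decomposition within the XP budget. I would then normalize $T$ into an arborescence whose internal nodes are of \emph{introduce}, \emph{forget}, or \emph{join} type, with $|V(T)|=O(n)$.

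For each node $t\in V(T)$, set $A_t=\bigcup_{t'\in V(T_t)}\beta(t')$ and maintain a DP table indexed by \emph{signatures} describing how a hypothetical simple directed path $P\subseteq G$ restricts to $A_t\cup\Gamma(t)$. A signature is a triple $(\tau,\mu,\ell)$, where $\tau$ assigns to each vertex of $\Gamma(t)$ one of the roles \emph{unused}, \emph{interior}, \emph{start}, \emph{end}, \emph{open-out} (a piece of $P$ inside $A_t$ ends here and is to be continued outside), or \emph{open-in} (symmetric); $\mu$ is a matching on the open endpoints recording which pairs are already linked by a subpath routed through $A_t$; and $\ell\in\{0,\dots,n\}$ is the number of edges of $P$ used inside $A_t$. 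Since $|\Gamma(t)|\le w+1$ and the simplicity of $P$ forces each interface vertex to carry at most one role, the number of signatures per node is $w^{O(w)}\cdot n = n^{O(w)}$.

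Transitions for introduce, forget, and join nodes are standard. Introduce/forget nodes refresh the signature using the newly exposed or forgotten vertex, together with the edges of $G$ incident to it inside $\Gamma(t)$; join nodes merge the two child tables by taking the union of their pairings in every way consistent with the labellings $\tau$. The subtle point is that the merger must not accidentally close a cycle, and this is exactly where the defining property of a directed tree decomposition is used: since no closed walk in $G-\gamma(e)$ can simultaneously meet a subtree and its complement, any cycle created by a would-be gluing must thread through $\Gamma(t)$ and is therefore visible in $(\tau,\mu)$ and rejectable locally. Reading off the maximum realizable $\ell$ at the root, taken over signatures with empty $\mu$ and at most one \emph{start} and one \emph{end}, yields the length of the longest simple directed path in total time $n^{O(w)}$.

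The main obstacle will be establishing correctness of the join transition—proving that the local signature at $\Gamma(t)$ carries enough information to detect every would-be cycle closure and every violation of simplicity when partial pieces from the two subtrees are merged. This is precisely where the closed-walk axiom of directed tree decompositions does the essential work, and formalizing it is the technical heart of the proof. As an alternative, very short route one can observe that the existence of a simple directed path of length at least $k$ is expressible in MSO with counting constraints, so the theorem also follows as a corollary of de Oliveira Oliveira's~\cite{Oliveira16} metatheorem for MSO with counting on digraphs of bounded directed treewidth; indeed, this is precisely how the authors cite the result.
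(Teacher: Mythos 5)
The paper does not actually prove this statement: it is imported as a black box from de Oliveira Oliveira's metatheorem~\cite{Oliveira16} (extending Johnson et al.~\cite{DBLP:journals/jct/JohnsonRST01}), since the existence of a long simple directed path is expressible in MSO with counting constraints. So your closing ``alternative short route'' coincides exactly with what the paper does, and as a citation-level justification it is adequate.

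Your primary route, however, has a genuine gap. A directed tree decomposition does not provide the separation property your signature relies on: unlike undirected treewidth, edges of $G$ may run directly between $A_t\setminus\Gamma(t)$ and $V(G)\setminus(A_t\cup\Gamma(t))$. The only guarantee (property 2 of Definition~\ref{def:directedTw}) is that a walk cannot leave the part below an edge $e$ of $T$ and later return to it while avoiding $\gamma(e)$. Consequently a simple directed path may enter and leave $A_t$ at vertices that are \emph{not} in $\Gamma(t)$, so a signature consisting only of roles and a matching on $\Gamma(t)$ cannot describe the restriction of $P$ to $A_t$, and the undirected-style introduce/forget/join normalization is likewise unavailable. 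What the guard property really buys is a bound of $O(w)$ on the \emph{number} of maximal segments of $P$ inside $A_t$ (each re-entry must cross a guard vertex), not control over where those segments begin and end; the known XP algorithms therefore record the actual endpoint vertices of these at most $O(w)$ segments together with their total length, which is precisely what forces the $n^{O(w)}$ (XP rather than FPT) running time. Relatedly, the closed-walk axiom is used to bound the number of crossings, not, as you suggest, to detect cycle closure at join nodes. To make a direct proof work you would have to redesign the table along these lines, essentially reproducing the Johnson--Robertson--Seymour--Thomas argument; otherwise you should simply rely on the citation, as the paper does.
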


%\subsection{Algorithm}
We will use the Directed Grid Theorem (Theorem \ref{theorem:directed-grid-theroem}), which either returns a cylindrical wall of order $\OO(k)$ in $G$ or concludes that the directed treewidth of $G$ is bounded in terms of~$k$. In the former case, we prove that a path of length $g \cdot k$ always exists in $G$. In the latter case, we use Theorem \ref{theorem:longest-path-directed-treewidth} to solve the problem optimally by an {XP} algorithm.

%\subsection{Correctness}
Throughout this section we abbreviate $W = W_{2k+1}$ (the cylindrical wall of order $2k+1$).
In order to establish correctness of
the algorithm, %\ref{algorithm2},
we need to argue that any subdivision $H$ of $W$ admits a path $k$ times longer than the girth of $H$.
%We first identify the subgraph in $G$ corresponding to the cylindrical wall $W_{2k+1}$.

We denote the $2k+1$ vertex disjoint cycles in $W$ by $C^W_1, C^W_2, \ldots C^W_{2k+1}$ counting from the innermost one.
For a subdivision $H$ of $W$ we denote the counterpart of $C^W_i$ in $H$ as $C^H_i$.
We refer to the girth of $H$ as $g$.
Note that $g$ lower bounds the length of each cycle $C^H_i$.

\begin{definition}
\label{definition:cycle-segment}
 A subpath in $C^W_i$ is called a {\em segment} of $C^W_i$ if its endpoints have out-neighbors in $C^W_{i+1}$ and none of its internal vertices has out-neighbors in $C^W_{i+1}$. \mic{A subpath in $C^H_i$ is a {\em segment} of $C^H_i$ if it is a subdivision of a segment in $C^W_i$.}
 %if its endpoints have out-neighbors in $C^H_{i+1}$ and none of its internal vertices has out-neighbors in $C^H_{i+1}$. 
\end{definition}

We make note of a few properties of segments.

\begin{observation}\label{observation:ci-segments}
%Let $C^W_i$ and $C^H_i, i \in [2k+2]$, be as defined in Definitions \ref{definition:cylindrical-wall-subgraph} and  \ref{definition:cylindrical-wall-subgraph-cycles}. 
For every $i \in [2k+1]$ the following hold.
\begin{enumerate}
    \item Each cycle $C^W_i$ is a cyclic concatenation of $k+1$ segments of length $4$.
    \item Each cycle $C^H_i$ is a cyclic concatenation of $k+1$ segments.
    \item If $i>1$ then each segment of $C^W_i$ has a unique internal vertex with an in-neighbor~in~$C^W_{i-1}$.
\end{enumerate}  
\end{observation}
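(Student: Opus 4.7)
The proof will proceed by direct inspection of the structure of $W = W_{2k+1}$. Recall that $W$ is obtained from the cylindrical grid $G_{2k+1}$ by splitting every degree-four vertex $v_{i,j}$ into a pair $v_{i,j,\inn}, v_{i,j,\out}$ joined by an internal arc. Around any interior cycle $C^W_i$, the vertices therefore cyclically read $\ldots v_{i,j,\inn}, v_{i,j,\out}, v_{i,j+1,\inn}, v_{i,j+1,\out}\ldots$, with every pair of consecutive split vertices joined by a single cycle arc.

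For Part 1, I would identify the outgoing vertices of $C^W_i$ precisely: arcs from $C^W_i$ to $C^W_{i+1}$ come only from the downward paths $P_j$ with odd $j$, so the outgoing vertices are exactly the $v_{i,j,\out}$ for the appropriate odd columns $j$. Walking around $C^W_i$ between two consecutive such outgoing vertices, one encounters the pattern $\out \to \inn \to \out \to \inn \to \out$, confirming that each segment has length $4$; counting the outgoing vertices then yields the stated number of segments. For Part 3, fixing an arbitrary segment of $C^W_i$ with $i > 1$ and enumerating its three internal vertices, I would observe that the only vertices on $C^W_i$ with an in-neighbor in $C^W_{i-1}$ are the in-halves $v_{i,j,\inn}$ for $j$ odd (indexing the downward paths arriving from row $i-1$); matching this against the three internal columns of the segment yields exactly one such vertex. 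Finally, Part 2 is immediate from Definition~\ref{definition:cycle-segment}: since $C^H_i$ is a subdivision of $C^W_i$, and a segment of $C^H_i$ is by definition a subdivision of a segment of $C^W_i$, the cyclic concatenation lifts directly from $W$ to $H$, preserving the segment count.

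The main obstacle is purely a bookkeeping one --- carefully tracking which column indices carry downward paths versus upward paths, how this alternation interacts with the in/out split of each vertex, and in particular handling the boundary cycles $C^W_1$ and $C^W_{2k+1}$ whose vertices are of degree three in $G_{2k+1}$ and hence are \emph{not} split in $W$. Once the correspondence between odd/even column parity and outgoing/incoming arcs between adjacent rows is pinned down, all three parts reduce to routine case analysis.
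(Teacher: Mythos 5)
Your approach---direct inspection of the in/out split and the parity of columns in $W_{2k+1}$---is the right one, and your analysis of the $\out \to \inn \to \out \to \inn \to \out$ pattern on interior cycles is correct. However, your sketch explicitly defers exactly the two places where the Observation, as stated, does not hold up. First, a wall of order $2k+1$ has $2k+1$ downward (odd-column) paths, so each interior cycle $C^W_i$ has $2k+1$ vertices with out-neighbors in $C^W_{i+1}$ and hence decomposes into $2k+1$ segments, not $k+1$; you can already see this for $k=1$, where $C^W_1$ in $W_3$ has three exit vertices. Second, as you yourself flag, the boundary cycles $C^W_1$ and $C^W_{2k+1}$ consist of unsplit degree-3 vertices; their segments therefore have length $2$, not $4$, and for $C^W_{2k+1}$ Definition~\ref{definition:cycle-segment} does not even apply since there is no $C^W_{2k+2}$. ``Counting the outgoing vertices'' and ``routine case analysis'' are precisely the steps your proposal punts on, and carrying out either one would produce a contradiction with the claimed $k+1$ rather than a proof.

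The discrepancy is harmless downstream: replacing $k+1$ by the correct $2k+1$ in Lemma~\ref{lemma:segment-length} only tightens the averaging bound, and $\left(g - \frac{g}{2k+1}\right)(k+1) \ge gk$ still holds, so Lemma~\ref{lemma:path-length-gk} goes through. But a proof of the Observation as literally written cannot be completed, and a reviewer carrying out your plan should report that $k+1$ needs to be $2k+1$ and that the length-4 claim holds only for $1 < i < 2k+1$.
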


We show that in every cycle $C^H_i$ we can choose a segment $S$ so that the path obtained by 
traversing all vertices in $C^H_i$ that are not internal vertices of $S$ is almost as long as  $C^H_i$.

\begin{lemma}\label{lemma:segment-length} 
For a segment $S$ of $C^H_i$, let $\widehat S$ be the subpath  of $C^H_i$ that starts at the end of $S$ and ends at the start of $S$.
Then there exists a segment $S^H_i$ of $C^H_i$ such that $\widehat{S}^H_i$ has length at least
$g - \frac{g}{k+1}$.
\end{lemma}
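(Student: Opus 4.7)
The plan is a direct averaging/pigeonhole argument, using only the fact that $C^H_i$ is a cyclic concatenation of $k+1$ segments (part 2 of \Cref{observation:ci-segments}) together with the lower bound $|C^H_i| \ge g$ coming from the girth.

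First, I would record that for any segment $S$ of $C^H_i$, the cycle $C^H_i$ decomposes edge-wise as $S$ followed by $\widehat S$, so
\[
|C^H_i| \;=\; |S| \;+\; |\widehat S|.
\]
Since by \Cref{observation:ci-segments}(2) the cycle $C^H_i$ is a cyclic concatenation of exactly $k+1$ segments, say $S_1, S_2, \dots, S_{k+1}$, their lengths sum to $|C^H_i|$. By pigeonhole, at least one of them satisfies
\[
|S_j| \;\le\; \frac{|C^H_i|}{k+1}.
\]
I would take $S^H_i$ to be this shortest segment.

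Then, using $|C^H_i| \ge g$ (since $g$ is the girth of $H$ and $C^H_i$ is a cycle in $H$), I obtain
\[
|\widehat{S}^H_i| \;=\; |C^H_i| - |S^H_i| \;\ge\; |C^H_i|\left(1 - \frac{1}{k+1}\right) \;\ge\; g\left(1 - \frac{1}{k+1}\right) \;=\; g - \frac{g}{k+1},
\]
which is exactly the bound in the statement.

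There is essentially no obstacle here: the only substantive input is the structural fact that every $C^H_i$ splits into $k+1$ segments, which is given by \Cref{observation:ci-segments}, and the girth lower bound on $|C^H_i|$. The rest is a one-line averaging argument. The one thing to double-check when writing the final proof is the convention that $S$ and $\widehat S$ share precisely their two endpoints, so that $|S| + |\widehat S|$ equals $|C^H_i|$ (counting edges, which matches the definition of length used earlier in the paper); this is immediate from the definition of $\widehat S$ as starting at the end of $S$ and ending at the start of $S$ along $C^H_i$.
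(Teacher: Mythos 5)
Your proof is correct and matches the paper's argument essentially verbatim: both invoke Observation~\ref{observation:ci-segments}(2) to decompose $C^H_i$ into $k+1$ segments, pick a shortest one by averaging, and combine $|\widehat{S}^H_i| = |C^H_i| - |S^H_i|$ with the girth bound $|C^H_i| \ge g$. The extra care you take in spelling out the edge-count decomposition $|C^H_i| = |S| + |\widehat S|$ is implicit in the paper but is the same argument.
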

\begin{proof}
By Observation \ref{observation:ci-segments} we know that  $C^H_i$
can be partitioned into
$k+1$ segments.
By a counting argument, there exist a segment $S^H_i$ of size at most $\frac{|E(C^H_i)|}{k+1}$. Thus, for the path $\widehat{S}^H_i$ complementing $S^H_i$, we have $|E(\widehat{S}^H_i)| \geq |E(C^H_i)| \cdot (1 - \frac{1}{k+1}) \ge g \cdot (1 - \frac{1}{k+1})$.
\end{proof}

%We call $S^H_i$ the {\em bottleneck segment} of $C^H_i$.
Let $s^\out_i$ denote the first vertex on  $S^H_i$.
If $i > 1$, we define $s^\inn_i$
to be the unique internal vertex of $S^H_i$ 
\mic{corresponding to a vertex in $C^W_i$
with an  in-neighbor in $C^{W}_{i-1}$.}
%that has an in-neighbor in $C^{H}_{i-1}$.
If $i=1$, we define $s^\inn_i$ to be the last vertex on $S^H_i$.

\begin{figure}
\centering
\includegraphics[scale=0.45]{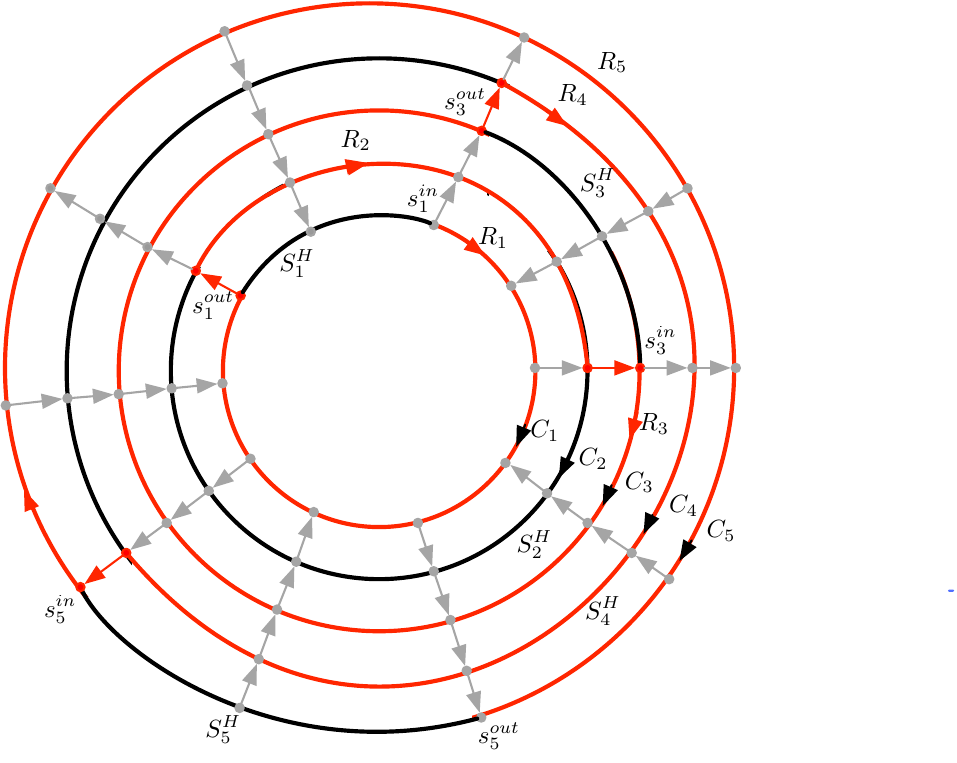}  
\caption{The path $R$ (colored in red) in a subdivision of $W_5$, described in Lemma \ref{lemma:path-length-gk}.}\label{figure-lemma-cylin-grid}  
\end{figure}

\begin{lemma}\label{lemma:path-length-gk}
%Let $H$ and $C^H_i, i \in [2k+2]$, be as defined in Definition \ref{definition:cylindrical-wall-subgraph}. Then, 
Let $H$ be a subdivision of $W_{2k+1}$ and $g$ be the girth of $H$.
Then there exists a path of length at least $g \cdot k$ in $H$.
\end{lemma}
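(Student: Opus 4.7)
\medskip

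The plan is to construct a long simple path $R$ in $H$ that sweeps through each of the $2k+1$ cycles $C^H_1,\ldots,C^H_{2k+1}$ in order, stitched together by the outward radial subdivisions inherited from $W = W_{2k+1}$.

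First, I would organise the segments across cycles into \emph{columns}. In the wall $W$, the $k+1$ segments of each cycle are naturally indexed by the outward radial paths of $W$: the starting vertex of the $j$-th segment of $C^W_i$ is the source of a fixed outward radial path of $W$, and this path's endpoint on $C^W_{i+1}$ is, by item~3 of Observation~\ref{observation:ci-segments}, the distinguished internal in-receiving vertex of the $j$-th segment of $C^W_{i+1}$. A column is thus a family of $2k+1$ segments, one per cycle, connected in cyclic succession by the outward radial subdivisions of $W$; this structure is inherited by $H$.

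Next, by averaging over the $k+1$ columns I obtain a column $j^{\star}$ with total segment length
$$\sum_{i=1}^{2k+1} |S^H_i| \;\le\; \frac{1}{k+1}\sum_{i=1}^{2k+1}|E(C^H_i)|,$$
where $S^H_i$ is the segment of $C^H_i$ in column $j^{\star}$. Letting $\widehat{S}^H_i$ be its complementary arc in $C^H_i$ and $s^{\text{out}}_i,s^{\text{in}}_i$ be as in the text, I would build $R$ inductively: start at $s^{\text{in}}_1$, traverse $C^H_1$ in its cyclic direction through the remainder of $S^H_1$ and then through $\widehat{S}^H_1$ to reach $s^{\text{out}}_1$, follow the outward radial subdivision from $s^{\text{out}}_1$ (which by the column property lands at $s^{\text{in}}_2$), and iterate up to $s^{\text{out}}_{2k+1}$.

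The path $R$ is simple because each cycle $C^H_i$ is touched in a single contiguous arc from $s^{\text{in}}_i$ to $s^{\text{out}}_i$, the radial subdivisions used are pairwise vertex-disjoint, and radial internal vertices do not lie on any $C^H_j$. Its length satisfies
$$|E(R)| \;\ge\; \sum_{i=1}^{2k+1} |\widehat{S}^H_i| \;\ge\; \frac{k}{k+1}\sum_{i=1}^{2k+1}|E(C^H_i)| \;\ge\; \frac{k(2k+1)}{k+1}\,g \;\ge\; gk,$$
as desired.

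The main obstacle is the compatibility between the segment choices in different cycles and the radial subdivisions of $W$: picking each $S^H_i$ independently via Lemma~\ref{lemma:segment-length} yields a per-cycle length bound but leaves the radial landing in $C^H_{i+1}$ arbitrary, so $R$ cannot be defined cleanly. The column-based selection resolves this by trading the per-cycle bound of Lemma~\ref{lemma:segment-length} for a global averaging bound over $k+1$ columns, which simultaneously makes the total $\widehat{S}^H_i$-length large enough and aligns the segments with the outward radial paths of $W$ so that the inductive construction of $R$ goes through without repeating vertices.
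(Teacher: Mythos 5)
Your argument is correct and genuinely different from the paper's. The paper chooses the smallest segment $S^H_i$ in each cycle independently (via \Cref{lemma:segment-length}) and then builds a path in which the long complementary arc $\widehat{S}^H_i$ is traversed only in the $k+1$ odd-indexed cycles; the even-indexed cycles serve merely as transit, and the arcs $R_i$ there may be arbitrarily short, precisely because the independently chosen segment endpoints in consecutive cycles need not be compatible with the radial landing points. You instead organize segments into \emph{columns} aligned with the outward radial paths of $W$, select the globally lightest column by averaging, and spiral through all $2k+1$ cycles; the alignment forces every $R_i$ to contain $\widehat{S}^H_i$. The cost of alignment is giving up the per-cycle minimum of \Cref{lemma:segment-length}; the gain is that the length bound sums over all $2k+1$ cycles rather than just the $k+1$ odd ones, and indeed your calculation produces the slightly stronger $\frac{k(2k+1)}{k+1}\,g$. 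Two small remarks: first, the outward radial edge from the first vertex of segment $j$ of $C^W_i$ lands at the distinguished internal vertex of a segment of $C^W_{i+1}$ whose index is shifted (it is the segment \emph{ending} at $v_{i+1,j}^{\mathrm{out}}$, not the one starting there), so the column bijection carries a constant shift rather than preserving the index $j$; this is harmless. Second, both your proof and the paper's gloss over the fact that segments of the outermost cycle $C^W_{2k+1}$ are not literally covered by \Cref{definition:cycle-segment} (there is no $C^W_{2k+2}$), so $s^{\mathrm{out}}_{2k+1}$ needs an ad hoc definition; either fix works for both arguments.
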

\begin{proof}
For $i \in [2k+1]$ we define path $R_i$ in $C_i$ as follows.
%For each cycle $C^H_i$, fix a bottleneck segment $S^H_i$ and identify a path $P^H_i$ as follows.
If $i$ is odd, we set $R_i$ to be
the subpath of $C_i$ from $s^\inn_{i}$ to $s^\out_{i}$.
%$P^H_i$: the path in $C_i$ from the end of the bottleneck segment $S^H_i$ to the start of $S^H_i$. 
If $i$ is even, the path $R_i$ 
begins at the unique vertex in $C^H_i$
\mic{that can be reached via a subdivided edge from 
$s^\out_{i-1}$ and  $R_i$ 
ends at the unique vertex in $C^H_i$
from which $s^\inn_{i+1}$ is reachable by via a subdivided edge.}
%that is an out-neighbor of $s^\out_{i-1}$ and ends at the unique vertex in $C^H_i$ that is an in-neighbor of $s^\inn_{i+1}$.
Note that the first and last indices in $[2k+1]$ are odd, so the paths $R_i$ are well-defined.
We can now concatenate $R_1,R_2,\dots,R_{2k+1}$ 
\mic{using the subdivided edges between the cycles} 
to obtain a path $R$ in $H$; see Figure \ref{figure-lemma-cylin-grid}.

We argue that $R$ is sufficiently long.
There are $k+1$ odd indices in $[2k+1]$, and, for each of them, $R_i$ 
contains the path $\widehat{S}^H_i$ (from \Cref{lemma:segment-length})
of length at least $g - \frac{g}{k+1}$. 
Since these form vertex disjoint subpaths of $R$, we conclude that $E(R) \ge (g - \frac{g}{k+1})(k+1) = g(k+1) - g = g \cdot k$.
%The lemma follows.
%
\end{proof}

%We now have the following lemma which shows a way to traverse all the cycles of $W_{2k+2}$ missing only a few edges.
%
%
%\begin{lemma} Let us look at the cycle $C'_i$ of $W_{2k+2}$ with $1 < i <2k+2$. Let $u$ be a vertex in $C'_i$ that has an inner edge with the other endpoint in $C'_{i-1}$. Starting from $u$, if we traverse clockwise along $C'_i$, let $u'$ be the last vertex that has an outer edge with the other endpoint in $C'_{i+1}$. Then we have traversed all the edges of $C'_i$ except the three edges between $u'$ and $u'$ in $C'_i$
%\end{lemma}
%

We are ready to summarize the entire algorithm.

\thmLongPathXP*
\begin{proof}
We first execute the algorithm from Theorem \ref{theorem:directed-grid-theroem}, which returns either a directed tree decomposition of width $f(k)$, for some computable function $f$, or a
subgraph $H$ of $G$ that is a subdivision of $W_{2k+1}$.
Note that the girth of $H$ is at least the girth of $G$.
In the first case, we apply the algorithm from Theorem \ref{theorem:longest-path-directed-treewidth} 
to find the longest path in $G$ in polynomial time for fixed $k$. 
In the latter case, Lemma \ref{lemma:path-length-gk} asserts
that $H$ contains a path of length at least $k$ times the girth of $H$, so
we can report that $(G,k)$ a Yes-instance.
%In the later case, let us look at the subgraph $H$ of $G$ which is a subdivision of $W_k$. The graph $W_k$ has vertex disjoint cycles $C_1, C_2, \ldots C_k$. There are corresponding cycles in the subgraph $H$ as well. Each of this cycle has length at least $g$ as the girth of $H$ is at least $g$, the girth of the graph $G$ (girth of a graph doesn't decrease by deleting edges or vertices). We get a path of length $k \cdot g$ by starting from a vertex in the outer cycle, traversing to the final vertex of the cycle and entering the next cycle and so on to the final vertex of the innermost cycle. Hence in the later case, there is always a path of length at least $k \cdot g$.
%from Lemma \ref{lemma:butterfly-model}, we know that there is a tree-like model $\mu$ that maps the edges of the cylindrical grid $H$ to edges of $G$ and vertices of $H$ to subgraphs of $G$. Since the girth of $G$ is $g$ and $H$ is obtained by contractions, 
%\todo[inline]{Clean up}
\end{proof}
%%
%% Bibliography
%%

%% Please use bibtex, 

\bibliography{main-file.bib}

\end{document}